\documentclass[11pt,a4paper]{article}
\usepackage{fullpage}

\usepackage{amssymb}
\usepackage{amsthm}
\usepackage{mathtools}
\usepackage{bm}
\usepackage{hyperref}
\usepackage[numbers,sort&compress]{natbib}
\usepackage[capitalise]{cleveref}

\newcommand{\cards}[1]{| #1 |}
\newcommand{\sset}[1]{\left\{ #1\right\}}
\newcommand{\fwhs}[1]{\; | \; #1 }
\renewcommand\vec{\bm}
\DeclareMathOperator{\sw}{SW}
\DeclareMathOperator{\opt}{OPT}
\DeclareMathOperator{\rev}{REV}
\DeclareMathOperator*{\argmax}{argmax}

\theoremstyle{definition}
\newtheorem{definition}{Definition}

\theoremstyle{plain}
\newtheorem{theorem}{Theorem}[section]
\newtheorem{lemma}[theorem]{Lemma}
\newtheorem{claim}[theorem]{Claim}
\newtheorem{corollary}[theorem]{Corollary}
\newtheorem{observation}{Observation}
\Crefname{claim}{Claim}{Claims}

\title{Nash Equilibria in Auctions with Pacing Strategies: \\Complexity and Inefficiency\thanks{Aris Filos-Ratsikas and Mohamad Latifian were supported by the UK Engineering and Physical Sciences Research Council (EPSRC) grant EP/Y003624/1. Charalampos Kokkalis was supported by an EPSRC DTA Scholarship (Reference EP/W524384/1).}}
\author{
  Aris Filos-Ratsikas\\
  Imperial College London, UK\\
  \href{mailto:aris.filos-ratsikas@imperial.ac.uk}{\small{\texttt{aris.filos-ratsikas@imperial.ac.uk}}}
  \and
  Charalampos Kokkalis\\
  University of Edinburgh, UK\\
  \href{mailto:charalampos.kokkalis@ed.ac.uk}{\small{\texttt{charalampos.kokkalis@ed.ac.uk}}}
  \and
  Mohamad Latifian\\
  University of Edinburgh, UK\\
  \href{mailto:mohamad.latifian@ed.ac.uk}{\small{\texttt{mohamad.latifian@ed.ac.uk}}}
}
\date{September 28, 2026}

\begin{document}

\maketitle

\begin{abstract}
We introduce and study \emph{Auctions with Pacing Strategies (APS) games}, a full-information model in which utility-maximizing bidders compete across many simultaneous first-price auctions, each choosing a single \emph{pacing multiplier} that uniformly scales their values into bids.
We settle three central questions.
First, we show that there are instances that admit no approximate pure Nash equilibria.
Then, we prove that the problem of deciding whether an APS game admits an (approximate) equilibrium is NP-complete in general, but can be solved in polynomial time if either the number of bidders or the number of items is fixed.
Finally, when an equilibrium does exist, we characterize its inefficiency exactly, showing that both the Price of Anarchy and the Price of Stability equal $\frac{e}{e-1}$. 
\end{abstract}

\section{Introduction}
Auctions are among the oldest and most widely used mechanisms for allocating scarce resources, dating back to antiquity and today underpinning markets as varied as the sale of art, treasury bonds, and spectrum licenses. Over the past two decades their most consequential application has arguably become \emph{online advertising}: each time a webpage loads, the advertising slots it displays are sold in real time through auctions run by ad exchanges, a process repeated billions of times a day. The scale is enormous: advertising accounts for the bulk of the income of the largest internet platforms, generating, for instance, roughly \$265 billion for Alphabet in 2024 alone, about three quarters of the company's total revenue.\footnote{Alphabet Inc., 2024 Annual Report (Form 10-K). See also \url{https://www.statista.com/statistics/266249/advertising-revenue-of-google/}.} For much of this history the format of choice was the \emph{second-price} auction and its generalizations, valued for being truthful; more recently, however, the industry has undergone a pronounced shift towards \emph{first-price} auctions, now the dominant format in display advertising exchanges. Unlike the second-price format, a first-price auction is not truthful: bidding one's true value leaves no surplus, so a bidder must strategically shade her bid, and the resulting outcome depends on how all participants reason about one another. This places the game-theoretic analysis of first-price auctions, and in particular the existence, computation, and efficiency of their equilibria, at the heart of understanding the modern advertising ecosystem.

How should a bidder decide what to bid? At the scale of online advertising, computing a bespoke bid for each auction is out of the question: a single advertiser may take part in millions of auctions per day, each cleared in milliseconds, so reasoning separately about the optimal bid in every one of them is infeasible. What advertisers do have is a reliable handle on their \emph{values}: much of the machinery of modern ad platforms is devoted to estimating how much a given impression is worth, through audience targeting and click-through or conversion prediction. This asymmetry, plentiful information about values but no room to strategize auction by auction, motivates a simple and by now standard heuristic~\citep{conitzer2022pacing}: rather than tailoring a bid to each auction, a bidder scales \emph{all} of her values by a single multiplicative factor, her \emph{pacing multiplier}, and submits the resulting shaded values as bids everywhere. Her entire strategy then collapses to the choice of a single number. In practice, advertisers typically also face budget or return-on-spend constraints, and autobidders adjust their pacing multipliers over time to meet them, often using no-regret learning algorithms~\citep{aggarwal2019autobidding,balseiro2019learning}.

This simplification does not remove strategic reasoning so much as concentrate it into that one choice. Because a single multiplier governs every bid at once, the auctions become coupled: raising the multiplier to capture a contested impression also raises the price paid on all items the bidder is already winning, while lowering it to save on spending might result in losing some item. The multiplier that is best for a bidder therefore depends on those chosen by her competitors, and the stable outcomes of this interaction are precisely the equilibria of a game played through pacing multipliers. It is this game, in the first-price setting, that we study in this paper.

The questions we ask about it are the classical ones of algorithmic game theory: do pure Nash equilibria always exist, can we decide if one exists/compute it efficiently, and how much social welfare could be lost at equilibrium? For pacing in advertising markets similar questions have received a great deal of attention, but almost always through a \emph{market} lens rather than a strategic one. The most influential such formulation is the \emph{First-Price Pacing Equilibrium} (FPPE) of \citet{conitzer2022pacing}, in which bidders are budget-constrained and each pacing multiplier is not a free strategic choice but is fixed by a market-clearing condition that ties a bidder's spending to her budget. 

We take the complementary, game-theoretic view: we treat the pacing multiplier as the genuine strategic choice of a \emph{utility-maximizing} bidder, who balances the value of the items she wins against the price she pays for them. This gives rise to a very rich strategic game, even in the absence of budget constraints. Asking our three questions for the resulting full-information game, we find that the answers differ sharply from those of the market setting: equilibria may fail to exist, deciding whether they do is computationally hard, and, even when they exist, they can be inefficient. Despite the rich literature on pacing and on the efficiency of auctions, surveyed in \cref{sec:related}, to the best of our knowledge this basic strategic model has not been studied before.

\subsection{Our Contribution}

We introduce and study \emph{Auctions with Pacing Strategies (APS) games}, the full-information game in which $n$ utility-maximizing bidders each choose a single pacing multiplier to shade their values across $m$ simultaneous first-price auctions. We consider three fundamental questions often posed in algorithmic game theory, related to \emph{existence}, \emph{computational complexity}, and \emph{inefficiency} of (pure) Nash equilibria. Our results completely settle all three questions. In more detail:

\begin{itemize}

\item[-] First, on \emph{existence}, we show that a pure Nash equilibrium need not exist, already with two bidders and two items. This non-existence is robust to approximation: no $\varepsilon$-approximate PNE is guaranteed for any $\varepsilon$ up to roughly $m/25$, where $m$ is the number of items of the APS instance.

\item[-] Second, on \emph{computational complexity}, we show that deciding whether an APS game admits an (approximate) PNE is NP-complete. The hardness is again robust to approximation, and in fact survives a \emph{super-constant} slack: it persists for $\varepsilon = \Theta(m^{1-\delta})$, for every constant $\delta \in (0,1)$. On the positive side, we give polynomial-time algorithms whenever either the number of bidders $n$ or the number of items $m$ is held constant, further clarifying the tractability boundary of the problem.

\item[-] Third, on \emph{efficiency}, we exactly characterize the social welfare loss at equilibrium whenever one exists. We prove that both the Price of Anarchy and the Price of Stability equal $\frac{e}{e-1}$, so every equilibrium recovers at least a $(1-1/e)$ fraction of the optimal welfare, and no choice of equilibrium can do better in the worst case.

\end{itemize}

Taken together, these results show that pacing alone already makes the strategic problem rich: deciding whether an equilibrium exists is NP-complete in general, and even when the number of bidders or items is fixed, it requires involved polynomial-time algorithms. Adding budgets cannot make the problem easier, since our hardness construction embeds directly into the budgeted setting as an instance in which no budget constraint binds; we discuss budgets further in the concluding section.

\subsection{Related Work}\label{sec:related}

Our work lies at the intersection of three strands of research: pacing in online advertising markets, the existence and computation of equilibria in such markets, and the inefficiency of (simultaneous) auctions.
We survey each in turn and position our contributions accordingly.

\paragraph{Budget management and pacing.}
The use of a single multiplicative \emph{pacing multiplier} to throttle a bidder's bids is the dominant budget-management heuristic deployed by autobidding agents in practice, and it is the strategy space we study. The idea goes back to the dynamics of bid optimization studied by \citet{borgs2007dynamics}, who showed that simple multiplicative bid-adjustment rules used to smooth budget expenditure are tightly connected to market equilibria. A substantial line of work analyzes repeated auctions with budgets from a learning and mean-field perspective: \citet{balseiro2015repeated} characterize equilibria of budget-constrained ad exchanges via a fluid (mean-field) relaxation, and \citet{balseiro2019learning} show that adaptive pacing strategies are regret-minimizing and converge to a market equilibrium. \citet{conitzer2022multiplicative} formalized \emph{multiplicative pacing equilibria} for second-price markets, in which each bidder uniformly shades her values by a multiplier (precisely the strategic primitive we adopt), and \citet{conitzer2022pacing} developed the analogous theory of \emph{first-price pacing equilibria} (FPPE).
A key conceptual difference is that much of this literature models bidders either as following a fixed pacing heuristic or as \emph{value-maximizers} subject to budget/return constraints, with the multiplier emerging from a budget-feasibility (market-clearing) condition. As a result, a pacing equilibrium is in essence a market equilibrium, rather than a Nash equilibrium of a fully strategic game, see \citet{conitzer2022pacing} for an interesting discussion on the connection between the two. In contrast, we treat the pacing multiplier as the strategic choice of a \emph{utility-maximizing} (quasi-linear) bidder and study the pure Nash equilibria of the resulting full-information game.
\paragraph{Efficiency in the autobidding world.}
A recent and closely related body of work studies the efficiency of auctions when bidders are \emph{value-maximizing} autobidders subject to return-on-spend or budget constraints. Because of these constraints, the standard truthfulness of 2nd-price and VCG mechanisms breaks down, forcing strategic deviations that create inefficiency. This sharply contrasts with our utility-maximizing setting. 
\citet{aggarwal2019autobidding} introduced this model of autobidding with constraints and established a Price of Anarchy of $2$ for truthful (second-price) auctions, and \citet{deng2021towards} extended the factor-$2$ guarantee to VCG position auctions while showing that it can be improved below $2$ using value-based boosts and reserves. For the non-truthful \emph{first-price} format, which is closest to ours, \citet{liaw2023efficiency} proved that the Price of Anarchy under return-on-spend constraints is exactly $2$, and \citet{deng2022efficiency} generalized this to randomized bidding strategies and to mixed populations of value- and utility-maximizing agents, where the Price of Anarchy rises to roughly $2.188$; \citet{mehta2022auction} further showed that randomization on the auctioneer's side can push efficiency below the deterministic factor of $2$. Finally, for a comprehensive overview of the recent shift towards autobidding and its implications for auction theory, we refer the reader to the survey by \citet{aggarwal2024auto}.
Closer to the efficiency questions we study, \citet{colinibaldeschi2026liquid} recently introduced a type-dependent smoothness framework for autobidders with budget and return-on-spend constraints, and used it to obtain tight liquid welfare guarantees for simultaneous first-price auctions that depend on the mix of agent types present.

\paragraph{Existence and computation of equilibria.}
The existence and tractability of pacing equilibria depend sharply on the underlying auction format. \citet{conitzer2022pacing} show that first-price pacing equilibria always exist, are essentially unique, and can be computed in polynomial time via a convex (Eisenberg--Gale-type) program. The second-price case is markedly different: 
\citet{conitzer2022multiplicative} show that second-price pacing equilibria always exist but  computing either a social-welfare-maximizing or a revenue-maximizing pacing
equilibrium is NP-hard. Later \citet{chen2023pacing} proved that computing even an approximate pacing equilibrium for second-price auctions is PPAD-complete, and \citet{chenli2025constant} strengthened this to constant-factor approximations. 
Independently of our results, \citet{yan2026fewbuyers} and \citet{huang2026fewgoods} developed a cell decomposition of the space of pacing multipliers close in spirit to ours, for a different setting: the second-price pacing equilibria of \citet{conitzer2022multiplicative}, in which the pacing multipliers of budget-constrained bidders in second-price auctions are determined by a market-clearing condition. They use it to compute such an equilibrium exactly in polynomial time when, respectively, the number of bidders or the number of items is constant.
The methods are therefore related, but they are applied to a different problem: a second-price pacing equilibrium is known to always exist, so the task there is to compute one, whereas in our strategic setting an equilibrium may be absent and the question we answer is whether one exists at all. The computational complexity of equilibrium computation in other first-price mechanisms has also been studied extensively, albeit predominantly in the Bayesian (incomplete-information) setting for the sale of a single item, where the analysis differs substantially from ours, e.g., see \citep{chen2023complexity,filosratsikas2025correlated,filosratsikas2026symmetric,FGHK24_sicomp,fghlp2021_sicomp} and references therein.  

\paragraph{Price of Anarchy and smoothness.}
The concept of the Price of Anarchy was introduced by \citet{koutsoupias1999worst} as a measure of the worst-case efficiency loss incurred by selfish behaviour at equilibrium and has since become a central tool in algorithmic game theory. The efficiency of equilibria in auctions is often studied via the \emph{smoothness} framework, where \citet{roughgarden2015intrinsic} established that Price of Anarchy bounds proved via smoothness extend automatically to mixed Nash, correlated and coarse correlated equilibria, and \citet{syrgkanis2013composable} showed that simultaneous first-price item auctions are $(1-1/e)$-efficient, i.e., have a Price of Anarchy of $e/(e-1)$. The Price of Anarchy of simultaneous item auctions, including in the Bayesian setting, has been studied extensively, e.g., see~\citep{bhawalkar2011welfare,christodoulou2016bayesian,christodoulou2016tight,feldman2013simultaneous,syrgkanis2013composable} and the survey by \citet{roughgarden2017price}; for the Bayesian first-price auction of a single item, tight bounds on both the Price of Anarchy and the Price of Stability are now known~\citep{hoy2018tighter,jin2023first,jin2023stability,syrgkanis2013composable}. Our work can be seen as a variant of the simultaneous item bidding setting studied by \citet{feldman2013simultaneous}, \citet{christodoulou2016tight} and \citet{voudouris2020simple}, specialized to additive valuations, 
where bidders are restricted to pacing multipliers rather than choosing independent bids in every auction.

\section{Preliminaries}
An instance $\mathcal I$ of an \emph{Auctions with Pacing Strategies (APS)} problem consists of a set $\mathcal{N}$ of $n$ bidders, participating in $m$ parallel auctions $A_1,A_2,\ldots,A_m$, where 
each auction follows a first-price auction format.
Each bidder $i$ has a value $v_{i,j} \in [0, 1]$ for each item $j$, which is public information.
The strategy of bidder $i$ is then encoded by a single parameter $\alpha_i \in [0,1]$, also called a \emph{pacing multiplier}, which scales down her value for all items, such that her bid for item $j$ is $b_{i,j}=\alpha_i v_{i,j}$. Let $\vec \alpha = (\alpha_1, \ldots, \alpha_n)$ be the strategy profile and $\vec b$ denote the matrix of all the bids. 

\paragraph{First-Price Auctions.}
The mechanism runs $m$ parallel first-price auctions meaning that each item is assigned to a single highest bidder, who pays her own bid. Formally, let
\[
    W_j \coloneq \argmax_{i \in \mathcal{N}} b_{i,j}
\]
be the set of highest bidders on item $j$. A \emph{tie-breaking rule} selects a single winner $w_j \in W_j$.
We will fix a concrete such rule below.
The allocation and price of item $j$ are then
\begin{equation}
    \label{eq:allocation}
    x_{i,j} \coloneq
    \begin{cases}
        1, & \text{if } i = w_j, \\
        0, & \text{otherwise},
    \end{cases}
    \qquad
    p_j \coloneq \max_{i \in \mathcal{N}} b_{i,j},
\end{equation}
so that the winner $w_j$ pays $p_j = b_{w_j,j}$ and every other bidder pays $0$.

\paragraph{Tie-breaking.}
Since the set of pacing multipliers comes from a continuous space, the space of possible bids is also continuous.
While ties are unlikely to happen, since we are studying the complete information game, it is still necessary to pick a tie-breaking rule to be used in the computation of utilities.
We will assume that the auction employs the \emph{lexicographic} tie-breaking rule, restricted to the bidders that value the item positively: given the fixed ordering of the bidders $1, 2, \ldots, n$, whenever several bidders are tied for an item, it is awarded to the one with the smallest index among those with a positive value for it. Formally, this instantiates the winner of~\eqref{eq:allocation} as
\begin{equation}
    \label{eq:lex-winner}
    w_j \coloneq \min \sset{i \in W_j \fwhs v_{i,j} > 0},
\end{equation}
who is then the only bidder paying for item $j$; an item that no bidder values positively is left unallocated. Note that whenever the highest bid on an item is positive, every tied bidder values it positively (since $b_{i,j} = \alpha_i v_{i,j} > 0$ implies $v_{i,j} > 0$), so the restriction only matters for items on which all bids are $0$: such an item goes to the lowest-indexed bidder that values it, rather than to a bidder with no value for it. Equivalently, one may think of each bidder as participating only in the auctions for the items she values.

\paragraph{Utilities and Equilibrium.}
This setting can be viewed as a game of complete information, where each bidder $i$ chooses her pacing multiplier $\alpha_i$ aiming to maximize her utility from participating in the auctions.
Here the utility of a bidder $i$ is defined as:
\begin{equation}
    u_i(\alpha_i, \vec{\alpha_{-i}}) \coloneq \sum_{j=1}^m x_{i,j} \cdot (v_{i,j} - b_{i,j}) = \sum_{j=1}^m x_{i,j}\cdot v_{i,j} \cdot (1- \alpha_{i}),
\end{equation} 
 where $\vec{\alpha_{-i}}$ is the strategy profile of all the bidders except for bidder $i$. We call this the \emph{Auctions with Pacing Strategies (APS) game}.

In such games, given $\vec{\alpha_{-i}}$, a \emph{best-response} of bidder $i$ is a multiplier that maximizes her utility. Since the space of pacing multipliers is continuous while the allocation jumps at ties, such a maximizer need not exist: to take an item away from a bidder who wins ties against her, bidder $i$ has to bid strictly higher, and there is no smallest such bid. Her \emph{best-response payoff}
\begin{equation}\label{eq:br-payoff}
    u_i^{\ast}(\vec{\alpha_{-i}}) \coloneq \sup_{\alpha_i' \in [0,1]} u_i(\alpha_i', \vec{\alpha_{-i}})
\end{equation}
is nevertheless always well defined, and \cref{lem:deviation-sup} below shows that it is given by a simple formula. We are interested in the notion of an (approximate) equilibrium of such games, which is a strategy profile in which every bidder is best-responding up to an $\varepsilon$. It is formally defined as follows.

\begin{definition}[($\varepsilon$-approximate) Pure Nash Equilibrium of the APS game]
   For $\varepsilon \geq 0$,
   a strategy profile $\vec{\alpha}$ consisting of the bidders' pacing multipliers is an $\varepsilon$-approximate Pure Nash Equilibrium ($\varepsilon$-PNE) if, for any player $i$ and for any value of $\alpha_i' \in [0,1]$, it holds that:
   \begin{equation}
       u_i(\vec{\alpha}) \geq u_i(\alpha_i',\vec{\alpha_{-i}}) - \varepsilon.
   \end{equation}
\end{definition}
\noindent We will refer to a $0$-PNE as an \emph{exact} PNE of the game, and use $\mathsf{NE}(\mathcal I)$ to indicate the set of all PNE in an instance $\mathcal I$.

\paragraph{Computing best-responses.}
We now record how a bidder's best-response payoff is computed, showing that it does not depend on how the auction breaks ties.

Fix a bidder $i$ and a profile $\vec{\alpha_{-i}}$ of the remaining bidders, and let
\begin{equation}\label{eq:threshold}
    t_{i,j} \coloneq \frac{\max_{k \neq i} \alpha_k v_{k,j}}{v_{i,j}}
\end{equation}
be the smallest multiplier with which $i$ matches the highest competing bid on item $j$, with the convention $t_{i,j} = \infty$ when $v_{i,j} = 0$, so that an item she does not value is never acquired.
Let $\pi_i$ be a permutation of the items ordering these thresholds, so that $t_{i,\pi_i(1)} \leq \dots \leq t_{i,\pi_i(m)}$, let $S_{i,k} = \{\pi_i(1), \dots, \pi_i(k)\}$ be the prefix of the first $k$ of them, and set $t_{i,\pi_i(0)} \coloneq 0$, so that $S_{i,0} = \varnothing$.
Finally, write $V_i(S) = \sum_{j \in S} v_{i,j}$ for bidder $i$'s total value for a set $S$ of items.

\begin{lemma}\label{lem:deviation-sup}
    For every bidder $i$ and every profile $\vec{\alpha_{-i}}$, the best-response payoff of bidder $i$ is
    \begin{equation}\label{eq:deviation-sup}
        u_i^{\ast}(\vec{\alpha_{-i}})
        \;=\; \max_{k \,:\, t_{i,\pi_i(k)} \leq 1} \left(1 - t_{i,\pi_i(k)}\right) V_i(S_{i,k}).
    \end{equation}
    The right-hand side is the utility that bidder $i$ obtains when the ties she is involved in are resolved in her favour, and it does not depend on the tie-breaking rule employed by the auction.
\end{lemma}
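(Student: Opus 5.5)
We prove two inequalities, writing $R$ for the right-hand side of \eqref{eq:deviation-sup}. Throughout, fix $i$ and $\vec{\alpha_{-i}}$. The basic observation is that, with the other bids fixed, bidder $i$'s allocation at multiplier $\alpha_i$ is determined by comparing $\alpha_i$ with the thresholds $t_{i,j}$. If $\alpha_i > t_{i,j}$, then $\alpha_i v_{i,j}$ strictly exceeds every competing bid and she wins item $j$ regardless of tie-breaking. If $\alpha_i < t_{i,j}$, she loses it. Only at $\alpha_i = t_{i,j}$ can the outcome depend on the tie-breaking rule.

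\emph{Upper bound, $u_i(\alpha_i,\vec{\alpha_{-i}}) \le R$ for every $\alpha_i\in[0,1]$.} Let $T$ be the set of items $i$ wins at $\alpha_i$ and suppose it is nonempty. Every $j \in T$ satisfies $t_{i,j} \le \alpha_i$; in particular $v_{i,j}>0$. Let $k$ be the largest index with $t_{i,\pi_i(k)} \le \alpha_i$. Then $t_{i,\pi_i(k)} \le 1$, so this $k$ is admissible in the maximum. Every item with threshold at most $\alpha_i$ lies in $S_{i,k}$, which gives $T \subseteq S_{i,k}$. Since values are nonnegative, $V_i(T) \le V_i(S_{i,k})$. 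Also $\alpha_i \ge t_{i,\pi_i(k)}$, so
\[
u_i = (1-\alpha_i)V_i(T) \le (1-t_{i,\pi_i(k)})V_i(S_{i,k}) \le R.
\]
The only care needed is that the thresholds in the chosen prefix may be tied with items outside $T$. Taking the maximal $k$ at level $\le \alpha_i$ handles this, because any ordering of equal thresholds in $\pi_i$ yields the same full level set. When $T=\varnothing$ the utility is $0$, which equals the $k=0$ term (recall $t_{i,\pi_i(0)}=0$ and $V_i(\varnothing)=0$).

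\emph{Lower bound, $u_i^\ast \ge R$.} Fix an admissible $k \ge 1$ and write $t = t_{i,\pi_i(k)} \le 1$.
\begin{itemize}
\item If $t=1$, the term is $0 \le u_i^\ast$, since utilities are always nonnegative.
\item If $t<1$, take $\alpha_i = t+\delta$ for small $\delta>0$, so that $\alpha_i \le 1$. Every item of $S_{i,k}$ then has threshold $<\alpha_i$ and is won strictly, independent of tie-breaking. Hence $u_i(t+\delta) \ge (1-t-\delta)V_i(S_{i,k})$.
\end{itemize}
Letting $\delta \to 0$ gives $u_i^\ast \ge (1-t)V_i(S_{i,k})$. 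Maximizing over $k$ yields $u_i^\ast \ge R$, and combining with the upper bound gives equality.

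\emph{Tie-breaking independence.} The argument for the upper bound never used the tie-breaking rule: it only used that winning $j$ requires $t_{i,j}\le\alpha_i$. The lower bound used only strict wins. So the value $R$ is the same under any rule. Moreover, $R$ is attained exactly at $\alpha_i = t_{i,\pi_i(k^\ast)}$, where $k^\ast$ is a maximizing index, if all ties on $S_{i,k^\ast}$ are resolved in $i$'s favour. This justifies the interpretation stated in the lemma.

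The main subtlety is the bookkeeping around equal thresholds and items with $v_{i,j}=0$ (threshold $\infty$, so never in an admissible prefix). Beyond that the proof is routine.
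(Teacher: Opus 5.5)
Your proof is correct and follows essentially the same route as the paper: the set won at $\alpha_i$ lies in the prefix of items with threshold at most $\alpha_i$, which gives the upper bound, and bidding just above a threshold $t<1$ wins that whole prefix strictly, while prefixes with $t=1$ contribute $0$. The only difference is cosmetic: the paper splits the lower bound by whether the tie at $t_{i,\pi_i(k)}$ goes for or against the bidder, whereas you always approach from above, which works under any rule, and note attainment under favourable tie-breaking separately.
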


\begin{proof}
    As $\alpha_i'$ increases from $0$ to $1$, bidder $i$ acquires the items one by one in increasing order of their thresholds: she wins item $j$ whenever $\alpha_i' > t_{i,j}$ and loses it whenever $\alpha_i' < t_{i,j}$.
    The set she wins at $\alpha_i'$ is therefore contained in the set of items whose threshold is at most $\alpha_i'$, which, since $\pi_i$ orders the thresholds, is a prefix $S_{i,k}$ with $t_{i,\pi_i(k)} \leq \alpha_i'$; how the ties at $\alpha_i'$ itself fall is the only freedom, and it can only remove items from that prefix.
    Her utility is then at most $(1-\alpha_i')V_i(S_{i,k}) \leq (1 - t_{i,\pi_i(k)})V_i(S_{i,k})$, which is at most the right-hand side of~\eqref{eq:deviation-sup}; since $\alpha_i' \leq 1$, only prefixes with $t_{i,\pi_i(k)} \leq 1$ arise.

    For the converse, fix such a prefix $S_{i,k}$. If the ties at $t_{i,\pi_i(k)}$ are resolved in favour of $i$, then playing $\alpha_i' = t_{i,\pi_i(k)}$ wins her at least $S_{i,k}$ and yields exactly $(1 - t_{i,\pi_i(k)})V_i(S_{i,k})$. If they are resolved against her and $t_{i,\pi_i(k)} < 1$, then every $\alpha_i'$ slightly above $t_{i,\pi_i(k)}$ wins her a set containing $S_{i,k}$, so her utility is at least $(1-\alpha_i')V_i(S_{i,k})$, which tends to $(1 - t_{i,\pi_i(k)})V_i(S_{i,k})$ as $\alpha_i'$ decreases to $t_{i,\pi_i(k)}$. If instead $t_{i,\pi_i(k)} = 1$, the prefix contributes the value $0$, which is a lower bound on the supremum in any case.
    In both cases the supremum equals the right-hand side of~\eqref{eq:deviation-sup}, independently of how the ties were resolved.
\end{proof}

\noindent In particular, checking that bidder $i$ is best-responding up to $\varepsilon$ amounts to comparing $u_i(\vec{\alpha})$ with at most $m+1$ prefix values.
Whenever we compute a best-response in what follows, we therefore award the bidder any item she ties for; by \cref{lem:deviation-sup} this is not an assumption on the auction, but merely the way her best-response payoff is written.
A prefix with $t_{i,\pi_i(k)} > 1$ is not a feasible deviation and is excluded from~\eqref{eq:deviation-sup}.
When we later write the equilibrium condition as one linear constraint per prefix, such prefixes may nonetheless be kept without harm: $t_{i,\pi_i(k)} > 1$ makes the right-hand side of the corresponding constraint non-positive, while $u_i(\vec{\alpha}) \geq 0$ always, so the constraint is satisfied automatically.

We will also study the efficiency of such systems due to the selfish behaviour of the bidders, with respect to social welfare and through the lens of Price of Anarchy and Price of Stability. We postpone the formal definition of related notions to \Cref{sec:inefficiency}.

\section{Complexity of Equilibrium Computation}\label{sec:complexity}

In this section, first we show that APS games need not admit an (approximate) pure Nash equilibrium, and that the largest $\varepsilon$ for which an $\varepsilon$-PNE does not necessarily exist grows \emph{linearly} with the number of items. After that we discuss the hardness of deciding whether a given instance has an $\varepsilon$-PNE in general and some special cases.

We first state a simple transformation of APS instances, replication together with padding, which we use to amplify both our non-existence result and our hardness result.
Given an APS instance $\mathcal{I}$ with bidder set $\mathcal{N}$ and item set $M$, and integers $r \geq 1$ and $p \geq 0$, let $\mathcal{I}^{(r,p)}$ be the instance with the same bidder set, item set $(M \times [r]) \cup P$ where $\cards{P} = p$, and values
\[
    v_{i,(j,c)} = v_{i,j} \quad \text{for all } i \in \mathcal{N},\, j \in M,\, c \in [r],
    \qquad
    v_{i,q} = 0 \quad \text{for all } i \in \mathcal{N},\, q \in P;
\]
that is, $\mathcal{I}^{(r,p)}$ consists of $r$ item-disjoint copies of $\mathcal{I}$, padded with $p$ items that no bidder values.

\begin{lemma}[Replication]\label{lem:replication}
    Let $\mathcal{I}$ be an APS instance, let $r \geq 1$ and $p \geq 0$ be integers, and let $\vec{\alpha} \in [0,1]^n$ be any strategy profile. Then every bidder's utility satisfies $u_i^{\mathcal{I}^{(r,p)}}(\vec{\alpha}) = r \cdot u_i^{\mathcal{I}}(\vec{\alpha})$, and consequently, for every $\varepsilon \geq 0$, the set of $\varepsilon$-PNE of $\mathcal{I}^{(r,p)}$ equals the set of $\frac{\varepsilon}{r}$-PNE of $\mathcal{I}$.
    In particular, if $\mathcal{I}$ admits no $\delta$-PNE then $\mathcal{I}^{(r,p)}$ admits no $\varepsilon$-PNE for any $\varepsilon \leq r\delta$, and $\vec{\alpha}$ is an exact PNE of $\mathcal{I}$ if and only if it is an exact PNE of $\mathcal{I}^{(r,p)}$.
\end{lemma}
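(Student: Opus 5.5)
The plan is to prove the utility identity directly from the allocation rule, and then derive the equilibrium statements from it by a one-line scaling argument.

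\emph{Step 1: the allocation in $\mathcal{I}^{(r,p)}$ copies that of $\mathcal{I}$.} Fix any profile $\vec{\alpha}$ and take an item $(j,c)$ of $\mathcal{I}^{(r,p)}$. Every bidder values it exactly as she values $j$ in $\mathcal{I}$, and the bidders keep their indices. Hence the bids $b_{i,(j,c)} = \alpha_i v_{i,j}$ coincide with $b_{i,j}$, the set of highest bidders coincides with $W_j$, and the lexicographic rule~\eqref{eq:lex-winner} picks the same winner $w_j$. This rule depends only on the bids, the values and the fixed ordering, so the price $p_j$ is also the same. Consequently $x_{i,(j,c)} = x_{i,j}$ for every $c \in [r]$.

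\emph{Step 2: padding items contribute nothing.} Every padding item $q \in P$ has $v_{i,q} = 0$ for all $i$. So the set in~\eqref{eq:lex-winner} is empty, $q$ is left unallocated, and the summand $x_{i,q} v_{i,q}(1-\alpha_i)$ vanishes anyway. Summing the utility over items gives
\[
u_i^{\mathcal{I}^{(r,p)}}(\vec{\alpha}) = \sum_{c=1}^r \sum_{j \in M} x_{i,j} v_{i,j}(1-\alpha_i) = r\, u_i^{\mathcal{I}}(\vec{\alpha}).
\]

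\emph{Step 3: equilibria.} The identity holds for every profile, including every deviation profile $(\alpha_i', \vec{\alpha_{-i}})$. The $\varepsilon$-PNE condition in $\mathcal{I}^{(r,p)}$ reads $r u_i^{\mathcal{I}}(\vec{\alpha}) \ge r u_i^{\mathcal{I}}(\alpha_i',\vec{\alpha_{-i}}) - \varepsilon$ for all $i$ and all $\alpha_i'$. Since $r \ge 1 > 0$, dividing by $r$ shows this is equivalent to the $\frac{\varepsilon}{r}$-PNE condition in $\mathcal{I}$, and both games have the same strategy space $[0,1]^n$. The remaining consequences follow at once:
\begin{itemize}
\item If $\varepsilon \le r\delta$, then any $\varepsilon$-PNE of $\mathcal{I}^{(r,p)}$ would be an $\frac{\varepsilon}{r}$-PNE of $\mathcal{I}$. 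As $\frac{\varepsilon}{r} \le \delta$, it would also be a $\delta$-PNE of $\mathcal{I}$, which contradicts the assumption.
\item The case $\varepsilon = 0$ gives the exact-PNE equivalence.
\end{itemize}

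\emph{Expected obstacle.} The only delicate point is Step 1: checking that tie-breaking really transfers copy by copy. This holds because the rule is the positive-value-restricted lexicographic rule, which is determined by the per-item bids and values alone. The padding items must also not disturb anything, and they do not, because the positive-value restriction leaves them unallocated. Everything else is bookkeeping.
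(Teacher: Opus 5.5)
Your proposal is correct and follows the paper's proof essentially step by step. Both arguments show that the lexicographic winner and price transfer copy by copy, that padding items stay unallocated, and that the identity $u_i^{\mathcal{I}^{(r,p)}} = r\,u_i^{\mathcal{I}}$ also holds for deviation profiles in the unchanged strategy space. Dividing by $r$ then gives the $\frac{\varepsilon}{r}$-equivalence and its two special cases.
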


Replication does not enlarge the strategy space: a bidder still plays a single multiplier, which acts on all $r$ copies simultaneously, so she cannot behave differently in different copies and every deviation gain is multiplied by exactly $r$. The two sets of equilibria are equal, and not merely related by inclusion, so replication transfers both a positive and a negative answer. The padding items are valued at $0$ by every bidder, so they are never allocated and only serve to fix the number of items.

\begin{proof}
    Fix a profile $\vec{\alpha}$, and consider a copy $c \in [r]$ of an item $j \in M$.
    The bid of bidder $i$ on the item $(j,c)$ is $\alpha_i v_{i,(j,c)} = \alpha_i v_{i,j}$, exactly her bid on $j$ in $\mathcal{I}$; in particular, the set of highest bidders on $(j,c)$ and the set of bidders valuing it positively coincide with the corresponding sets for $j$ in $\mathcal{I}$.
    Since the lexicographic rule~\eqref{eq:lex-winner} selects the winner using only the bidders' indices and these two sets, it awards $(j,c)$ to the winner of $j$, at the same price.
    Every copy therefore reproduces the allocation and the prices of $\mathcal{I}$ under $\vec{\alpha}$, while a padding item $q \in P$, being valued at $0$ by everyone, is left unallocated and contributes neither value nor payment to any bidder.
    As utilities are additively separable across items, summing over the $r$ copies and the $p$ padding items gives $u_i^{\mathcal{I}^{(r,p)}}(\vec{\alpha}) = r \cdot u_i^{\mathcal{I}}(\vec{\alpha})$.

    Crucially, the strategy space is unchanged: in $\mathcal{I}^{(r,p)}$ bidder $i$ still commits to a single multiplier and thus cannot treat copies differently, so the identity applies verbatim to any unilateral deviation $\alpha_i'$. Every deviation gain is therefore scaled by exactly the same factor:
    \[
        u_i^{\mathcal{I}^{(r,p)}}(\alpha_i',\vec{\alpha_{-i}}) - u_i^{\mathcal{I}^{(r,p)}}(\vec{\alpha})
        = r\left(u_i^{\mathcal{I}}(\alpha_i',\vec{\alpha_{-i}}) - u_i^{\mathcal{I}}(\vec{\alpha})\right).
    \]
    Hence no deviation gains more than $\varepsilon$ in $\mathcal{I}^{(r,p)}$ if and only if no deviation gains more than $\frac{\varepsilon}{r}$ in $\mathcal{I}$, which is precisely the claimed equivalence.
    The two special cases follow by taking $\varepsilon \leq r\delta$ and $\varepsilon = 0$, respectively.
\end{proof}

\subsection{Non-existence of equilibrium}

The core of our proof to show the non-existence of equilibrium is a single two-item gadget that on its own already precludes an $\varepsilon$-PNE for $\varepsilon\le\frac{2}{25}$ (the case $k=1$ below); we then apply \cref{lem:replication} to this gadget to amplify the threshold.

For impossibility results of this kind, the aim usually is to rule out an $\varepsilon$-PNE for a \emph{constant} $\varepsilon$. Here, however, values are normalized \emph{per item} ($v_{i,j}\in[0,1]$), which places a bidder's utility on a scale of $\Theta(m)$ rather than $\Theta(1)$; the additive slack $\varepsilon$ should be measured against this scale, so the natural analog of a constant-$\varepsilon$ impossibility is one in which $\varepsilon$ grows linearly with $m$. Equivalently, rescaling each instance's values so that the utilities lie in $[0,1]$ turns our bound into a constant. Note that this linear growth is asymptotically the maximum we can hope for: any bidder's utility, and hence any unilateral deviation gain, is at most her total value $\sum_j v_{i,j}\le m$, so every profile is trivially an $\varepsilon$-PNE once $\varepsilon\ge m$; non-existence can therefore hold only for $\varepsilon=O(m)$.

\begin{theorem}\label{thm:linear-nonexistence}
    For any positive integer $k$ and any $\varepsilon\leq\frac{2k}{25}$, APS games do not always admit an $\varepsilon$-PNE, even for the case of $2$ bidders and $2k$ items.
    More generally, for every number of items $m\geq 2$ there are $2$-bidder instances with $m$ items admitting no $\varepsilon$-PNE for any $\varepsilon\leq\frac{2}{25}\left\lfloor\frac{m}{2}\right\rfloor$, that is, $\frac{m}{25}$ for even $m$ and $\frac{m-1}{25}$ for odd $m$; the non-existence threshold therefore grows linearly in the number of items.
\end{theorem}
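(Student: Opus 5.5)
The plan is to build a single two-bidder, two-item gadget $\mathcal{G}$ that admits no $\frac{2}{25}$-PNE, and then amplify it with \cref{lem:replication}. Taking $\mathcal{I} = \mathcal{G}^{(k,0)}$ gives $2$ bidders and $2k$ items with no $\varepsilon$-PNE for any $\varepsilon \le k\cdot\frac{2}{25}$. For general $m \ge 2$ I will take $\mathcal{G}^{(\lfloor m/2\rfloor,\, m \bmod 2)}$, padding with one zero-valued item when $m$ is odd, which gives the bound $\frac{2}{25}\lfloor m/2\rfloor$. So everything reduces to the case $k=1$.

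\emph{The gadget.} Bidder $1$ is interested in both items: item $1$, which bidder $2$ does not value, and a contested item $2$. Bidder $2$ values only item $2$. I will start from $v_{1,\cdot}=(a,b)$ and $v_{2,\cdot}=(0,c)$, with parameters $a,b,c$ to be tuned at the end; the symmetric choice $a=b=c=1$ already exhibits the cycle. The intuition is as follows. Bidder $1$ gets item $1$ at price $\alpha_1 a$ with no competition, so she would like a low multiplier. Yet she also wants item $2$ whenever bidder $2$'s bid on it is cheap, and raising $\alpha_1$ to win item $2$ raises her payment on item $1$ too. Bidder $2$ wants to bid just above bidder $1$ on item $2$, since she loses ties because of her larger index.

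\emph{Case analysis.} Fix a profile $(\alpha_1,\alpha_2)$ and use \cref{lem:deviation-sup} to write each best-response payoff as a maximum over at most three prefixes. I then split on who wins item $2$.

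\begin{itemize}
\item \textbf{Case (i): bidder $1$ wins item $2$} (this includes the all-zero tie). Bidder $2$ gets $0$ but could obtain $c(1-t_{2,2})$ with $t_{2,2} = \alpha_1 b/c$. So an $\varepsilon$-PNE forces $\alpha_1 b \ge c-\varepsilon$, meaning bidder $1$ bids high. Bidder $1$'s utility is then $(1-\alpha_1)(a+b)$, which is small. Her deviation to $\alpha_1'=0$ keeps item $1$ for free, since bidder $2$ bids $0$ on it and ties go to bidder $1$, and yields $a$. This gives a lower bound on her gain.
\item \textbf{Case (ii): bidder $2$ wins item $2$}, so $\alpha_2 c > \alpha_1 b$ or the tie is broken for bidder $2$. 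Bidder $2$'s gain from undercutting down to threshold $\alpha_1 b/c$ is $\alpha_2 c - \alpha_1 b$, so this must be at most $\varepsilon$. Bidder $1$'s utility is $a(1-\alpha_1)$, which is at most $a$. Her deviation to $\alpha_1' = \alpha_2 c/b$ captures both items and yields $(a+b)(1-\alpha_2 c/b)$. Combining the two constraints eliminates $\alpha_2$ and leaves a condition on $\alpha_1 \in [0,1]$ that fails for small $\varepsilon$.
\end{itemize}

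In each case I will derive an explicit piecewise-linear lower bound, in terms of $\alpha_1$ (and $\alpha_2$), on the maximum deviation gain over the two bidders. The final step is to minimise this bound over the profile and check that the minimum exceeds $\frac{2}{25}$.

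\emph{Main obstacle.} The hardest part is the tuning in case (ii): choosing $a,b,c$ so that the worst profile leaves a gain of exactly $\frac{2}{25}$. The denominator $25$ suggests values such as multiples of $\frac15$, with the balancing point where bidder $2$'s undercut gain equals bidder $1$'s outbid gain. I will first treat $a,b,c$ symbolically and equate the two gains to find the minimising $\alpha_1$ as a function of the parameters. I will then maximise the resulting minimum over the parameters, for instance by fixing $c=1$, and verify by the case analysis that case (i) is never better for the bidders. The care needed is in the boundary ties at the thresholds, but \cref{lem:deviation-sup} makes the deviation suprema independent of how those ties are broken, so only the on-path allocation needs the lexicographic rule.
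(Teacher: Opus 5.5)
Your plan is correct and has the same two-step structure as the paper: a two-bidder, two-item gadget with no $\frac{2}{25}$-PNE, amplified by \cref{lem:replication} with $r=\lfloor m/2\rfloor$ and $p=m-2r$. Your gadget, however, is different and simpler.

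\begin{itemize}
\item[] \textbf{The paper's gadget.} It uses $\vec{v_1}=(1/2,1)$ and $\vec{v_2}=(1/2,3/10)$, so both items are contested. It first shows that at any $\frac{2}{25}$-PNE each bidder wins exactly one item, and then combines three linear deviation inequalities.
\item[] \textbf{Your gadget.} Item $1$ is captive to bidder $1$, so the only allocation question is who wins item $2$. No tuning is needed: take $a=b=c=1$.
\item[] \emph{Case (i).} The deviation gains are $1-\alpha_1$ for bidder $2$ and at least $2\alpha_1-1$ for bidder $1$ (dropping to $0$). So the larger of the two is at least $\frac13$.
\item[] \emph{Case (ii).} The gains are at least $\alpha_2-\alpha_1$, $\alpha_1$ and $1-2\alpha_2+\alpha_1$. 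Averaging with weights $\frac12,\frac14,\frac14$ gives exactly $\frac14$, so the largest is at least $\frac14$, with equality at $(\alpha_1,\alpha_2)=(\frac14,\frac12)$.
\end{itemize}

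Hence your gadget has no $\varepsilon$-PNE for every $\varepsilon<\frac14$. This covers $\frac{2}{25}$ with room to spare, and would even improve the linear constant in the theorem. What the paper's gadget buys instead is non-existence in an instance where every bidder values every item. The statement does not require that.

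Two points in your sketch need correcting.

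\begin{itemize}
\item[] \textbf{Case (ii) does not close as written.} The two constraints you name are bidder $2$'s undercut, $\alpha_2c-\alpha_1b\le\varepsilon$, and bidder $1$'s capture of both items. Eliminating $\alpha_2$ from them only yields $\alpha_1\ge 1-\varepsilon/b-(a+b)\varepsilon/b^2$, which is satisfiable. For example, with $a=b=c=1$ and $\varepsilon=0.05$, the profile $(0.9,0.95)$ meets both. The contradiction needs the third prefix: bidder $1$ drops to $\alpha_1'=0$ and keeps item $1$, which forces $\alpha_1a\le\varepsilon$. Your three-prefix framework already contains this deviation, but the case-(ii) argument has to use it explicitly.
\item[] \textbf{The target is a strict inequality.} You should not aim for a worst profile whose gain is \emph{exactly} $\frac{2}{25}$. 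If that minimum were attained, the profile would be a $\frac{2}{25}$-PNE. What you need is that the minimum maximal gain lies strictly above $\frac{2}{25}$, and the symmetric choice already delivers this.
\end{itemize}
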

\begin{proof}
    Consider \emph{the base instance}
   to be an APS game with $2$ bidders and $2$ items, where the bidders' values for the items are $\vec{v_1}=(1/2,1)$, $\vec{v_2}=(1/2,3/10)$.
    We first show that this instance admits no $\varepsilon$-PNE for $\varepsilon \leq \frac{2}{25}$.
    
    To reason about the bidders being at equilibrium, we will be comparing their strategies to their best-response.
    Since the action space is continuous, it is obvious that any best-response of a bidder will result in her bidding marginally higher than at least $1$ other bidder for some item, to win that item at the lowest possible price.
    To facilitate our analysis, we will be computing the value of the multiplier that results in a \emph{tie} for that item, awarding the item to the bidder that is computing her best-response.
    By \cref{lem:deviation-sup}, this is exactly her best-response payoff, whether or not the tie-breaking rule awards her the item.
        
    To begin with, notice that bidder $1$'s best-response is always to either tie bidder $2$ for the first item ($\alpha_1=\alpha_2$) or for the second item $\alpha_1=\frac{3}{10}\alpha_2$.
    Similarly, for bidder $2$'s best-response, it must be the case that either $\alpha_2=\alpha_1$ or $\alpha_2=\frac{10}{3}\alpha_1$.
    
    We start with the following observation.

    \begin{observation}
        At any $\frac{2}{25}$-PNE of this instance, each bidder wins exactly $1$ item.
    \end{observation}

    \begin{proof}
        Assume for contradiction that one of the bidders wins both items at an equilibrium $(\alpha_1,\alpha_2)$.
        We consider two different cases in which this can happen:

        \begin{enumerate}
            \item Bidder $2$ wins both items, meaning that $\alpha_2\geq \frac{10}{3}\alpha_1$.

            Consider the equilibrium condition from the point of view of bidder $1$, comparing to her deviation to a multiplier of $\tfrac{3}{10}\alpha_2$:
            \begin{equation*}
                u_1(\alpha_1,\alpha_2) \geq u_1\left(\frac{3}{10}\alpha_2, \alpha_2\right)-\frac{2}{25} \Rightarrow 0 \geq \left( 1- \frac{3}{10}\alpha_2\right) - \frac{2}{25} \Rightarrow \alpha_2 \geq \frac{46}{15}
            \end{equation*}
            which contradicts the fact that $\alpha_2 \leq 1$.
            
            \item Bidder $1$ wins both items, meaning that $\alpha_1 \geq \alpha_2$.

            The utility of bidder $1$ in this case is $u_1(\alpha_1,\alpha_2)=(1-\alpha_1)\frac{3}{2}$.

            Using the equilibrium condition:

            \begin{equation}\label{eq:bidder-1}
                u_2(\alpha_1,\alpha_2) \geq u_2(\alpha_1,\alpha_1) - \frac{2}{25} \Rightarrow (1-\alpha_1)\frac{1}{2} \leq \frac{2}{25} \Rightarrow \alpha_1 \geq \frac{21}{25}
            \end{equation}

            Using the equilibrium condition again, from the point of view of player $1$ this time.
            \[
            u_1(\alpha_1,\alpha_2) \geq u_1\left(\frac{3}{10}\alpha_2,\alpha_2\right)-\frac{2}{25} \Rightarrow (1-\alpha_1)\frac{3}{2} \geq \left(1-\frac{3}{10}\alpha_2\right)-\frac{2}{25} \Rightarrow \alpha_1 \leq \frac{29}{60}
            \]
            which contradicts~\eqref{eq:bidder-1}.
        \end{enumerate}
    \end{proof}

    This directly yields the following:

    \begin{corollary}
       At any $\frac{2}{25}$-PNE, bidder $2$ wins the first item and bidder $1$ wins the second one, or, equivalently, $\frac{10}{3}\alpha_1 \geq \alpha_2 > \alpha_1$.
    \end{corollary}

    Now, assume for contradiction that such an equilibrium $(\alpha_1,\alpha_2)$ exists.
    Then, by the equilibrium condition, comparing bidder $1$'s utility to her deviation to $\alpha_2$, which would win her both items, the following must hold:

    \begin{equation}\label{eq:bidder-1-1item}
        u_1(\alpha_1,\alpha_2) \geq u_1(\alpha_2,\alpha_2)-\frac{2}{25} \Rightarrow 1-\alpha_1 \geq (1-\alpha_2)\frac{3}{2} -\frac{2}{25} \Rightarrow \alpha_2 \geq \frac{2}{3}\alpha_1 + \frac{7}{25}
    \end{equation}

    Moreover, bidder $1$ must not profit from lowering her multiplier to $\frac{3}{10}\alpha_2$, which would still win her the second item (by tie-breaking), but at a lower price:
    \begin{equation}\label{eq:bidder-1-undercut}
        u_1(\alpha_1,\alpha_2) \geq u_1\left(\frac{3}{10}\alpha_2,\alpha_2\right)-\frac{2}{25} \Rightarrow 1-\alpha_1 \geq \left(1-\frac{3}{10}\alpha_2\right) -\frac{2}{25} \Rightarrow \alpha_1 \leq \frac{3}{10}\alpha_2 + \frac{2}{25}
    \end{equation}

    Looking at the equilibrium condition from the point of view of the second player:
    \begin{equation}\label{eq:bidder-2-1item}
        u_2(\alpha_1,\alpha_2) \geq u_2\left(\alpha_1,\alpha_1\right)-\frac{2}{25} \Rightarrow (1-\alpha_2)\frac{1}{2} \geq \left( 1- \alpha_1 \right) \frac{1}{2}-\frac{2}{25} \Rightarrow \alpha_2 \leq \alpha_1 + \frac{4}{25}
    \end{equation}

    Combining~\eqref{eq:bidder-1-1item} and~\eqref{eq:bidder-2-1item}, we get:
    \[
    \frac{2}{3}\alpha_1 + \frac{7}{25} \leq \alpha_2 \leq \alpha_1 + \frac{4}{25} \Rightarrow \alpha_1 \geq \frac{9}{25},
    \]
    while combining~\eqref{eq:bidder-1-undercut} and~\eqref{eq:bidder-2-1item} yields:
    \[
    \alpha_1 \leq \frac{3}{10}\left(\alpha_1 + \frac{4}{25}\right) + \frac{2}{25} \Rightarrow \alpha_1 \leq \frac{32}{175}.
    \]
    Since $\frac{32}{175} < \frac{9}{25}$, the two bounds are incompatible, leading to a contradiction.
    This establishes that the base instance admits no $\frac{2}{25}$-PNE.

    \medskip
    \emph{Amplification and padding.}
    Fix any number of items $m \geq 2$, and set $r = \lfloor m/2 \rfloor \geq 1$ and $p = m - 2r \in \{0,1\}$.
    Applying \cref{lem:replication} to the base instance $\mathcal{I}$ gives the instance $\mathcal{I}^{(r,p)}$, which has $2$ bidders and exactly $2r + p = m$ items.
    Since $\mathcal{I}$ admits no $\frac{2}{25}$-PNE, the lemma yields that $\mathcal{I}^{(r,p)}$ admits no $\varepsilon$-PNE for any $\varepsilon \leq \frac{2r}{25} = \frac{2}{25}\left\lfloor\frac{m}{2}\right\rfloor$, which equals $\frac{m}{25}$ for even $m$ and $\frac{m-1}{25}$ for odd $m$.
    Specialising to $p = 0$ and $m = 2r$ recovers the first statement of the theorem, with $k = r$ copies.
\end{proof}

\subsection{NP-completeness}

Next, we study the computational complexity of the problem of deciding whether an (approximate) equilibrium exists in a given instance.
We establish that the problem is NP-complete in general, and that this remains true even when the additive slack $\varepsilon$ is allowed to grow with the number of items.
We begin by proving that the problem belongs in NP.

\begin{theorem}\label{thm:np-membership}
    Deciding whether an APS game admits an $\varepsilon$-PNE is in NP, for any rational $\varepsilon \ge 0$ given as part of the input.
\end{theorem}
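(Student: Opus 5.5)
The plan is to use as certificate a polynomial-size \emph{combinatorial type} of a profile rather than the profile $\vec{\alpha}$ itself, which is real-valued. Once the type is fixed, every condition in the definition of an $\varepsilon$-PNE becomes linear in $\vec{\alpha}$, so the verifier can decide by linear programming whether some profile of that type is an $\varepsilon$-PNE. The certificate is then a guess of this type.

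\emph{The type.} It records three families of signs.
\begin{itemize}
\item For every item $j$ and every pair of bidders $i,k$, the sign ($<,=,>$) of $\alpha_i v_{i,j}-\alpha_k v_{k,j}$, and whether each bid $\alpha_i v_{i,j}$ is $0$.
\item For every bidder $i$ and every pair of items $j,j'$ she values positively, the sign of $t_{i,j}-t_{i,j'}$, and the sign of $t_{i,j}-1$.
\end{itemize}
This is $O(nm^2+n^2m)$ symbols.

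\emph{Linearity within a type.} The first family fixes, for each item, the set $W_j$ and hence the winner $w_j$ under the lexicographic rule~\eqref{eq:lex-winner}. So $u_i(\vec{\alpha})=(1-\alpha_i)\sum_j x_{i,j}v_{i,j}$ is a fixed linear function of $\alpha_i$. The first family also fixes, for each $i$ and $j$, which competitor $k$ attains $\max_{k\neq i}\alpha_k v_{k,j}$. Hence $t_{i,j}=\alpha_k v_{k,j}/v_{i,j}$ is a fixed linear function of $\vec{\alpha_{-i}}$.

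The comparison $t_{i,j}\le t_{i,j'}$ is, after clearing the positive denominators $v_{i,j},v_{i,j'}$, a linear comparison between two multipliers. Likewise $t_{i,j}\le 1$ is linear. So every recorded sign is a linear equality or strict or weak linear inequality in $\vec{\alpha}$. The second family fixes the order $\pi_i$, up to ties, and the prefixes $S_{i,k}$ together with the feasible ones.

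\emph{Equilibrium constraints.} By \cref{lem:deviation-sup}, $\vec{\alpha}$ is an $\varepsilon$-PNE if and only if for every $i$ and every feasible prefix $k$,
\[
u_i(\vec{\alpha})\ \ge\ \bigl(1-t_{i,\pi_i(k)}\bigr)V_i(S_{i,k})-\varepsilon .
\]
Each such constraint is linear in $\vec{\alpha}$, because $V_i(S_{i,k})$ is a constant once the type is fixed. As noted after \cref{lem:deviation-sup}, infeasible prefixes may be kept harmlessly, so the constraint list has at most $n(m+1)$ entries. Adding the box constraints $\vec{\alpha}\in[0,1]^n$ gives a polytope $P$ described by rationals of bit size polynomial in the input. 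An instance has an $\varepsilon$-PNE if and only if some type has a nonempty set of profiles realizing it and satisfying these constraints, and the certificate names that type.

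\emph{The expected obstacle: strict inequalities.} The cell of a type involves strict inequalities, which an LP does not directly express. I would handle them with a slack variable $s$: replace each strict inequality $a^\top\vec{\alpha}<b$ by $a^\top\vec{\alpha}+s\le b$ and maximize $s$ subject to $s\le 1$ and all the other constraints. The type is realizable by an $\varepsilon$-PNE if and only if this LP is feasible with optimum $s^\ast>0$, and this is checkable in polynomial time by the ellipsoid or interior-point method.

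Alternatively, the certificate can be an explicit profile. The LP optimum is attained at a vertex, whose coordinates are rationals of polynomial bit size by Cramer's rule, and such a vertex with $s^\ast>0$ is itself a valid $\vec{\alpha}$. The verifier then simply computes the allocation and the $m+1$ prefix values per bidder, which settles NP membership.
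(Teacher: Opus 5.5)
Your proposal is correct and follows the paper's argument. Fixing the order of bids on each item and the order of each bidder's thresholds makes every $\varepsilon$-PNE condition linear, strict inequalities are handled by one maximized slack variable, and a vertex optimum yields a rational profile of polynomial bit-length. The only difference is that your primary certificate is the combinatorial type, with the verifier solving the LP itself, whereas the paper uses that vertex profile directly as the certificate (your stated alternative) and checks it via \cref{lem:deviation-sup}.
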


\begin{proof}
    The certificate is a strategy profile $\vec{\alpha} \in \big([0,1] \cap \mathbb{Q}\big)^n$. We show that such a profile can be verified in polynomial time, and that whenever an $\varepsilon$-PNE exists, one of polynomial bit-length does too.

    \emph{Verification.} Given $\vec{\alpha}$, each item is awarded to the highest paced bid $\alpha_i v_{i,j}$ (with lexicographic tie-breaking), so the allocation and the utilities $u_i(\vec{\alpha})$ are computed in $\mathcal{O}(nm)$ time. To check the equilibrium condition for a bidder $i$, we compute the thresholds $t_{i,j}$ of~\eqref{eq:threshold} and sort them, which by \cref{lem:deviation-sup} gives her best-response payoff as the largest of the $\mathcal{O}(m)$ prefix values $(1-t_{i,\pi_i(k)})V_i(S_{i,k})$; evaluating them yields $u_i^{\ast}(\vec{\alpha}_{-i})$ in polynomial time. We accept iff
    \[
        u_i(\vec{\alpha}) \ge u_i^{\ast}(\vec{\alpha}_{-i}) - \varepsilon \qquad \text{for every bidder } i,
    \]
    which is exactly the $\varepsilon$-PNE condition. The entire check runs in time polynomial in $n$, $m$, and the input bit-length.

    \emph{A short certificate exists.} Suppose some $\varepsilon$-PNE $\vec{\alpha}^\ast$ exists. It induces a discrete structure: for every item $j$, the order of the bids $\alpha^\ast_i v_{i,j}$ of the bidders that value $j$, in which some bids may be equal, and for every bidder $i$, the order of her thresholds $t_{i,j}$ on the items she values. The order of the bids determines the winner of every item under the lexicographic rule, as well as the highest bid that every bidder faces on it, so each threshold is the bid of a fixed bidder divided by $v_{i,j}$, a linear function of $\vec{\alpha}$. With this structure held fixed, the conditions defining an $\varepsilon$-PNE become \emph{linear} in $\vec{\alpha}$. They are of three kinds: the equalities and inequalities that fix the order of the bids on each item; the inequalities encoding the fixed threshold orderings; and the best-response inequalities, one set per bidder, which bound $u_i(\vec{\alpha})$ from below by the utility of each of her prefix deviations (now linearly expressible), up to $\varepsilon$. Some of the inequalities of the first kind are strict, and we handle them with the slack-maximization step used in the proof of \cref{thm:const-m-poly}: we replace every strict inequality $a > b$ by $a \geq b + \delta$ for a single new variable $\delta \leq 1$, and maximize $\delta$. Every coefficient is a product or ratio of the rational valuations and of $\varepsilon$, and is therefore of polynomial bit-length. The resulting linear program is feasible with a positive objective value, as $\vec{\alpha}^\ast$ together with a small enough $\delta > 0$ satisfies it, and its optimum is attained at a vertex whose bit-length is polynomial in that of its coefficients~\citep{schrijver1986theory}. At this vertex $\delta > 0$, so the order of the bids on every item is the same as at $\vec{\alpha}^\ast$, and so are the allocation and the thresholds as linear functions of $\vec{\alpha}$. The vertex is therefore again an $\varepsilon$-PNE, and serves as a polynomial-size certificate.
\end{proof}

We now turn to hardness. Our main result is that hardness is not confined to a constant additive slack: it persists even when $\varepsilon$ is allowed to grow almost linearly with the number of items.

\begin{theorem}\label{thm:eps-np-hard}
    For every constant $\delta \in (0,1)$, deciding whether an APS game admits an $\varepsilon$-PNE is NP-hard, already on instances with $m$ items and $\varepsilon = \Theta(m^{1-\delta})$.
\end{theorem}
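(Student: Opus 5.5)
Our plan has two stages: a reduction establishing NP-hardness for a \emph{constant} slack $\varepsilon_0 > 0$, followed by amplification with \cref{lem:replication}.

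\textbf{Stage 1: hardness at constant slack.} We reduce from a standard NP-complete problem with bounded structure, such as 3-SAT with each variable occurring a bounded number of times. The gadget that forbids equilibria is the base instance of \cref{thm:linear-nonexistence}, which has no $\frac{2}{25}$-PNE.

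\begin{itemize}
\item \emph{Variable gadgets.} For each variable we introduce a pair of bidders and items whose $\varepsilon_0$-PNE come in exactly two separated families, for example pacing multipliers near a ``high'' value or near a ``low'' value. The two families encode true and false.
\item \emph{Clause gadgets.} For each clause we attach a copy of the non-existence gadget. The literal bidders of the clause also value a few of its items, chosen so that their bids dominate them exactly when the literal is true. When at least one literal is true, the extra competition removes the cycling and an $\varepsilon_0$-PNE exists inside the gadget. When all literals are false, the gadget's bidders face the original two-item instance, up to perturbations smaller than $\varepsilon_0$, so no $\varepsilon_0$-PNE exists.
\end{itemize}

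Every value is a fixed rational, so the instance has $m_0 = \Theta(N)$ items, where $N$ is the formula size. The main obstacle is \emph{robustness}. We must verify that, at any $\varepsilon_0$-PNE, the variable bidders really sit in one of the two families, that clause items do not give them profitable deviations exceeding $\varepsilon_0$, and that the gadget inequalities survive the additive slack. We would first fix all gadget values so that every relevant deviation gap is at least a fixed constant, say $\frac{1}{100}$. We would then take $\varepsilon_0$ strictly below half of that, and carry out a case analysis in the style of the proof of \cref{thm:linear-nonexistence}, now with one side of each comparison perturbed by a bounded contribution from the neighbouring gadgets.

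\textbf{Stage 2: amplification.} Given $\delta \in (0,1)$, apply \cref{lem:replication} to the Stage 1 instance $\mathcal{I}$, which has $m_0$ items. Choose $r = \lceil m_0^{(1-\delta)/\delta} \rceil$ and $p = 0$, so the new instance has $m = r m_0 = \Theta(m_0^{1/\delta})$ items. This is polynomial in $N$ because $\delta$ is a constant. By the lemma, the $(r\varepsilon_0)$-PNE of $\mathcal{I}^{(r,0)}$ are exactly the $\varepsilon_0$-PNE of $\mathcal{I}$. Hence the question for $\varepsilon = r\varepsilon_0$ is equivalent to the original one.

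Finally we check the size of the slack. Since $m_0 = \Theta(m^{\delta})$, we have $r = m/m_0 = \Theta(m^{1-\delta})$, and therefore $\varepsilon = \Theta(m^{1-\delta})$. If exact equality of $m$ to a prescribed value is needed, we use padding with $p$ small. Together with \cref{thm:np-membership}, this would yield the claimed NP-hardness.
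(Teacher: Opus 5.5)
\textbf{Stage 2 is fine.} It coincides with the paper's amplification step: the same $r=\lceil m_0^{(1-\delta)/\delta}\rceil$, the same use of \cref{lem:replication}, and the same asymptotics. It is correct provided Stage 1 supplies, for a fixed constant $\varepsilon_0$, instances that have an $\varepsilon_0$-PNE in the yes case and none in the no case.

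\textbf{The gap is Stage 1, which is not a proof.} No gadget is specified. The properties you need are asserted and then deferred as ``robustness'':
\begin{itemize}
\item variable gadgets whose $\varepsilon_0$-PNE split into two separated families;
\item clause gadgets that admit an $\varepsilon_0$-PNE exactly when some literal is true;
\item stability of both once they are coupled.
\end{itemize}
These properties are the whole content of the theorem, and they are far from automatic, because a bidder pays her single multiplier on \emph{every} item she wins.

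For instance, take a natural variable gadget: two bidders contest one item valued $1$ by both, and each also owns a private safe item of value $s>0$. It has no $\varepsilon_0$-PNE once $\varepsilon_0 < s/(2s+1)$. The winner undercuts unless the loser bids as high as she does. But the loser then pays that multiplier on her safe item and prefers to drop to $0$. The winner then drops to about $0$ as well, and the loser profitably grabs the contested item.

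The same coupling affects your clause gadgets:
\begin{itemize}
\item A ``true'' literal bidder that wins clause items pays her multiplier on them, which changes her incentives inside her variable gadget.
\item A ``false'' literal still bids $\alpha v$ on the clause items. This perturbs the exact deviation payoffs used in the proof of \cref{thm:linear-nonexistence}.
\item In the no case you must also rule out $\varepsilon_0$-PNE in which a variable bidder sits at an intermediate multiplier, dominating some of her clause items but not others, or in which the gadget bidders outbid a literal bidder.
\end{itemize}

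\textbf{How the paper avoids this.} It does not embed the non-existence gadget at all. It reduces from X3C (\cref{lem:hardcore}) and handles the coupling with two kinds of price-setting bidders:
\begin{itemize}
\item threat bidders $T_j$ that value nothing except the element item $e_j$, so they have no other payments to protect and will take $e_j$ whenever its price falls below $50-\varepsilon$;
\item element bidders $E_j$.
\end{itemize}
\cref{lem:floor} shows that every element is then won by a set bidder at price at least $50-\varepsilon$. An active set bidder that loses any of her elements gains more than $\varepsilon$ by dropping to $0$ and keeping her safe item. Hence the active set bidders are pairwise disjoint and form an exact cover. Conversely, an exact cover gives an exact PNE: covering bidders at $\tfrac12$, threat bidders at $1$, everyone else at $0$.

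Non-existence in the no case thus comes from this all-or-nothing structure rather than from a cycling gadget. Your Stage 1 is missing exactly this kind of explicit construction, with a complete case analysis at slack $\varepsilon\le 4$ (i.e.\ $\tfrac{1}{75}$ after rescaling).
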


Combining this with the membership result of \cref{thm:np-membership}, we obtain the claimed NP-completeness.

\begin{corollary}\label{cor:np-complete}
    Deciding whether an APS game admits an $\varepsilon$-PNE, for a rational $\varepsilon \geq 0$ given as part of the input, is NP-complete. This holds already for $\varepsilon = \Theta(m^{1-\delta})$, for every constant $\delta \in (0,1)$.
\end{corollary}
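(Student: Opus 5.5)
The corollary has two halves, membership and hardness, and the plan is simply to combine \cref{thm:np-membership} with \cref{thm:eps-np-hard}.

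\emph{Membership.} \Cref{thm:np-membership} already shows that for every rational $\varepsilon \ge 0$ given in the input, the existence of an $\varepsilon$-PNE is certified by a polynomial-size rational profile that can be verified in polynomial time. Its proof has two ingredients: the polynomial-time best-response check from \cref{lem:deviation-sup}, and the vertex argument for the linear program obtained by fixing the order of the bids and of the thresholds. Neither ingredient depends on the size of $\varepsilon$ relative to $m$. Hence membership in NP holds uniformly, and in particular on the restricted family where $\varepsilon = \Theta(m^{1-\delta})$.

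\emph{Hardness.} For each fixed $\delta \in (0,1)$, \cref{thm:eps-np-hard} gives NP-hardness already on instances with $\varepsilon = \Theta(m^{1-\delta})$. Since the general problem, with arbitrary rational $\varepsilon$ in the input, contains these instances as a special case, it is NP-hard as well.

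\emph{Combining the two.} The general problem is in NP and NP-hard, so it is NP-complete. For the restricted problem with $\varepsilon = \Theta(m^{1-\delta})$, hardness comes from \cref{thm:eps-np-hard} and membership is inherited from the general problem, so it is NP-complete too.

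The one point to check is that the hard instances produced by \cref{thm:eps-np-hard} have rational values and a rational $\varepsilon$ of polynomial bit-length, so that they are legitimate inputs to the problem of \cref{thm:np-membership}. This is the only thing linking the two theorems, and the replication and padding construction of \cref{lem:replication}, which that proof presumably relies on, preserves it. Otherwise the argument is immediate.
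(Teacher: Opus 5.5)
Your proof is correct and is the paper's argument: the paper obtains the corollary directly by combining the membership result of \cref{thm:np-membership} with the hardness result of \cref{thm:eps-np-hard}. The compatibility check you flag is carried out explicitly in the paper's proof of \cref{thm:eps-np-hard}, which notes that all values lie in $[0,1]$ and that $\varepsilon = r/75$ is a rational of polynomial bit-length; one small correction is that this proof uses pure replication $\mathcal{I}^{(r,0)}$, without padding.
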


The proof has two steps. We first reduce from \emph{Exact Cover by 3-Sets} (X3C) to obtain a \emph{hard core}, an instance on which the two answers are separated by a \emph{constant} gap. We then apply \cref{lem:replication} to that core to inflate the gap. Recall that in X3C we are given a universe $U = \{e_1, \dots, e_{3k}\}$ and a collection $\mathcal{S} = \{S_1, \dots, S_M\}$ of $3$-element subsets of $U$, and we must decide whether some subcollection $\mathcal{S}^\ast \subseteq \mathcal{S}$ of size exactly $k$ partitions $U$; this problem is NP-complete~\citep[problem SP2]{garey1979computers}.

\begin{lemma}[Hard core]\label{lem:hardcore}
    There is a polynomial-time algorithm that maps an X3C instance $(U, \mathcal{S})$ with $\cards{U} = 3k$ and $\cards{\mathcal{S}} = M$ to an APS instance $\mathcal{I}$ with $n_0 = M + 6k$ bidders, $m_0 = 6k + M$ items and values in $[0,1]$, such that:
    if $(U,\mathcal{S})$ admits an exact cover, then $\mathcal{I}$ admits an \emph{exact} PNE; and if it does not, then $\mathcal{I}$ admits no $\varepsilon$-PNE for any $\varepsilon \leq \frac{1}{75}$.
\end{lemma}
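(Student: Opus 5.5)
We would reduce from X3C by planting one copy of the two-item gadget from the proof of \cref{thm:linear-nonexistence} for every element, and letting a ``set bidder'' neutralise the gadgets of the elements it covers. This matches the counts in \cref{lem:hardcore}.

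\textbf{Construction.} For each element $e\in U$ we introduce two gadget bidders $a_e,b_e$ and two items $x_e,y_e$, with values $(1/2,1)$ and $(1/2,3/10)$ exactly as in the base instance. This gives $6k$ bidders and $6k$ items. For each set $S\in\mathcal{S}$ we introduce one set bidder $s_S$ and one private item $z_S$, giving $M$ more of each. Bidder $s_S$ values $z_S$ at some parameter $\beta$ and each $x_e$ with $e\in S$ at some parameter $\gamma$. Bidders are indexed so that set bidders precede gadget bidders, which lets set bidders win ties on the $x_e$'s. The items $z_S$ may also be valued by a second, competing set bidder, or by a fixed reference bidder, to create a price; choosing this is part of tuning $\beta,\gamma$. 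The intended dichotomy is that $s_S$ is either \emph{active} (high multiplier, winning all three $x_e$ for $e\in S$) or \emph{passive} (low multiplier, collecting only $z_S$). We calibrate the parameters so that being active is a best response exactly when $s_S$ wins all three of its $x_e$'s against the gadget bids, and is strictly unprofitable, by a constant margin, if it wins at most two.

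\textbf{Completeness.} Given an exact cover $\mathcal{S}^\ast$, the sets in $\mathcal{S}^\ast$ are active and the rest passive. Each element gadget then sees $x_e$ removed by exactly one set bidder. The residual one-item game between $a_e$ and $b_e$ on $y_e$ has the exact equilibrium $\alpha_{b_e}=1$, $\alpha_{a_e}=3/10$: bidder $a_e$ wins the tie by index, and $b_e$ gets utility $0$ and cannot profit. The gadget bidders must also not want to contest $x_e$ against the active set bidder. This forces the active multiplier to be high enough that $(1-\alpha)\cdot\frac12$ is nonpositive for any multiplier at which they would win it, which we would arrange by taking $\gamma$ small so that the active set bidder bids $1\cdot\gamma$ with a suitable ratio. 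We then verify each class of bidders against the prefix formula of \cref{lem:deviation-sup}, giving an exact PNE.

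\textbf{Soundness.} This is the main obstacle: showing that any $\frac1{75}$-PNE yields an exact cover. The plan has three steps.
\begin{enumerate}
\item Show that in any $\frac1{75}$-PNE every set bidder is approximately in one of the two modes. A set bidder that is active but wins only some of its $x_e$'s loses more than $\frac1{75}$ relative to going passive, by the calibration margin.
\item Show that if some element $e$ has $x_e$ not taken by an active set bidder, then the gadget on $(x_e,y_e)$ is effectively the base instance. The external bids on $x_e$ only constrain it further. Rerunning the case analysis of \cref{thm:linear-nonexistence} with slack $\frac1{75}<\frac2{25}$, and absorbing the perturbation caused by the set bidders into the gap $\frac2{25}-\frac1{75}$, gives a contradiction. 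This part needs care: a passive set bidder's low bid on $x_e$ could change the gadget's thresholds, so $\gamma$ must be small enough that passive bids never matter.
\item Conclude that every element is covered by an active set, and that active sets are pairwise disjoint, since an overlap makes one of them win at most two items. The active sets therefore form an exact cover.
\end{enumerate}
I expect the real difficulty to be choosing $\beta,\gamma$, and the competitor structure on $z_S$, so that all the inequalities hold simultaneously with the margin $\frac1{75}$.
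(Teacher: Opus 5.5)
Your reduction differs from the paper's. The paper does not reuse the base instance. For each element it adds an element bidder $E_j$ with a private safe item and a threat bidder $T_j$ who values only $e_j$, and \cref{lem:floor} shows that every element is won by a set bidder at price at least $50-\varepsilon$.

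\textbf{Step 2 of your soundness argument fails as stated.} Bids of set bidders on $x_e$ do not ``only constrain'' the gadget. They raise the thresholds of $a_e$ and $b_e$ on $x_e$ and so remove deviations. The contradiction in \cref{thm:linear-nonexistence} rests on exactly such a deviation, \eqref{eq:bidder-2-1item}, in which $b_e$ lowers her multiplier to tie $a_e$ on $x_e$. Concretely, take $\alpha_{b_e}=1$ and $\alpha_{a_e}=\frac{3}{10}$, and let some set bidder bid $Q\in[\frac12-\varepsilon,\frac12)$ on $x_e$. Then $b_e$ wins $x_e$ at price $\frac12$ and $a_e$ wins $y_e$. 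Bidder $a_e$ is exactly best-responding, and $b_e$'s best deviation gains only $\frac12-Q\le\varepsilon$. So an uncovered gadget can sit at an $\varepsilon$-equilibrium. The perturbation is of order $\frac12$, so it cannot be absorbed into $\frac{2}{25}-\frac{1}{75}$, and the contradiction has to come from the set bidder's incentives instead.

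Your remedy of taking $\gamma$ small is incompatible with completeness. In your own profile $b_e$ bids $\frac12$ on $x_e$, so an active set bidder must bid at least $\frac12$ there, which forces $\gamma\ge\frac12$.

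\textbf{The missing ingredient is a price floor.} You need a statement that controls set bidders' bids on elements they do \emph{not} win. In the paper, \cref{lem:floor} does this by looking at the cheapest element: a set bidder who is runner-up there loses all her elements and should drop to $0$. In your gadget this can be recovered, but only by a new case analysis, not by rerunning the base proof. If a set bidder wins $x_e$, then $a_e$ must win $y_e$, since otherwise she gains at least $\frac{7}{10}$. So $b_e$ has utility $0$, and the price of $x_e$ is at least $\frac12-\varepsilon$. If instead $b_e$ wins $x_e$, her deviation to win both items (which the base proof never uses) forces $\alpha_{b_e}\ge 1-O(\varepsilon)$. The analogue of \eqref{eq:bidder-2-1item} then forces some set bidder to bid $\frac12-O(\varepsilon)$ on $x_e$ while losing it, and only at that point does your calibration margin give a contradiction. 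The same floor is what justifies your step 1. Without it, a set bidder with an intermediate multiplier could win some $x_e$ cheaply, and calibration alone does not give the two-mode dichotomy.

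\textbf{The calibration must be tight.} For an \emph{exact} PNE in the completeness direction, active and passive utilities must be exactly equal. The paper has $600\cdot\frac12=300$; in your setting this means e.g.\ $\beta=3\gamma(2\gamma-1)$, such as $\beta=\gamma=\frac23$. If active is strictly better, a non-covering set bidder profits by raising her multiplier to $\frac{1}{2\gamma}$ and taking her three already-covered elements. Your plan never checks this deviation.
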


Before the formal proof, we describe the idea of the construction.
We construct an auction instance with three types of items: \emph{element items} (one per element), \emph{set-safe items} (one per subset), and \emph{threat-safe items}. We also introduce three classes of bidders: \emph{set bidders} (who value the elements in their subset and their specific set-safe item), \emph{element bidders}, and \emph{threat bidders}.

The core intuition relies on the interplay between the safe items and the element items. The instance is constructed in a way that in any $\varepsilon$-PNE, the element and threat bidders push the price of every element item to a high threshold. Consequently, element items must be won by set bidders.

Because a set bidder must bid aggressively to win element items, it is only profitable to do so if she wins \emph{all three} elements of her corresponding subset and otherwise, the high price makes it strictly better to drop her multiplier to $0$ and secure only her uncontested safe item. If two active set bidders overlap on a single element, the loser pays a high price but fails to secure all three items, triggering a profitable deviation to drop out. Therefore, the active set bidders must be completely disjoint. Since all $3k$ elements must be won, these disjoint active set bidders correspond exactly to an exact cover of $U$.

Conversely, if an exact cover exists, the covering set bidders can bid to exactly tie the threat bidders, winning via lexicographic tie-breaking, while non-covering set bidders bid $0$ to safely claim their safe items. We show that this forms an exact PNE.

\begin{proof}
Given an X3C instance $(U, \mathcal{S})$, we build an auction instance in polynomial time that admits an equilibrium if and only if the X3C instance admits an exact cover. 
To facilitate notation, in this proof we will allow the values of the bidders for the items to be arbitrary positive integers, rather than restricting them to $[0,1]$, and we will fix the value of $\varepsilon$ to be less than or equal to $4$. This is without loss of generality, as we can always scale down the valuations and $\varepsilon$ by the same factor to fit them into $[0,1]$.
We will in fact do this in the end to compute the right value of the slack.
The auction contains the following items:
\begin{itemize}
    \item[-] \emph{Element items} $e_1, \dots, e_{3k}$, one per element of $U$.
    \item[-] \emph{Set-safe items} $s_1, \dots, s_M$, one per set $S_i \in \mathcal{S}$.
    \item[-] \emph{Threat-safe items} $t_1, \dots, t_{3k}$, one per element of $U$. 
\end{itemize}
Here and throughout, we identify each element of $U$ with its corresponding element item under the natural bijection, writing $e_j$ for both; the intended meaning is always clear from context.

The auction also contains the following bidders:

\begin{itemize}
    \item[-] \emph{Set bidders} $A_1, \dots, A_M$ where $v_{A_i, e} = 100$ for each $e \in S_i$, and $v_{A_i, s_i} = 300$, for any $i \in [M]$.
    \item[-] \emph{Element bidders} $E_1, \dots, E_{3k}$: $v_{E_j, e_j} = 50$ and $v_{E_j, t_j} = 100$, for any $j \in [3k]$.
    \item[-] \emph{Threat bidders} $T_1, \dots, T_{3k}$: $v_{T_j, e_j} = 50$ (no safe item), for any $j \in [3k]$.
\end{itemize}
All valuations not listed are $0$. Given the naming of the bidders, the lexicographic tie-breaking rule will naturally prioritize set bidders over element bidders over threat bidders, tie-breaking within groups of same types of bidders using their indices.
We begin by some useful observations about the structure of $\varepsilon$-PNEs in this instance.
\begin{observation}\label{obs:undercut}
    Suppose that at a profile $\vec{\alpha}$ the threat bidder $T_j$ wins $e_j$ at price
    $\tau$, and the highest competing bid on $e_j$ is $r < \tau$. Then in any
    $\varepsilon$-PNE, $\tau - r \leq \varepsilon$.
\end{observation}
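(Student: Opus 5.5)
The plan is to exhibit a profitable unilateral deviation for $T_j$ whenever $\tau - r > \varepsilon$, namely lowering her multiplier just enough to still win $e_j$. The only item $T_j$ values is $e_j$, with $v_{T_j,e_j}=50$. So her utility at $\vec{\alpha}$ is exactly $50-\tau$, since she wins $e_j$ and pays her bid $\tau = 50\alpha_{T_j}$.

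The first step is to bound her best-response payoff from below using \cref{lem:deviation-sup}. Her threshold on $e_j$ is $t_{T_j,e_j} = r/50$, the highest competing bid divided by her value. This satisfies $r/50 < \tau/50 \leq 1$, so it is a feasible prefix. By \cref{lem:deviation-sup}, her best-response payoff is therefore at least $(1 - r/50)\cdot 50 = 50 - r$.

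Concretely, we can avoid relying on how ties are resolved, since $T_j$ has the largest index and loses all ties. Any multiplier $\alpha' \in (r/50, \tau/50)$ gives a bid strictly above $r$, so she still wins $e_j$, and her utility is $50 - 50\alpha'$. This tends to $50 - r$ as $\alpha' \downarrow r/50$.

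The final step is the equilibrium condition. At an $\varepsilon$-PNE we need $50 - \tau \geq u^{\ast}_{T_j} - \varepsilon \geq 50 - r - \varepsilon$, which rearranges to $\tau - r \leq \varepsilon$. If instead $\tau - r > \varepsilon$, we pick $\alpha'$ close enough to $r/50$ that $50 - 50\alpha' > 50 - \tau + \varepsilon$. This is a strict violation and a contradiction.

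The only point needing care is the strictness issue at ties. Because $T_j$ loses every tie, she cannot deviate to exactly $r/50$, so the argument must go through the supremum (equivalently \cref{lem:deviation-sup}) rather than an exact best response. Everything else is routine arithmetic.
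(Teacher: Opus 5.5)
Your proposal is correct and follows essentially the same route as the paper. The paper also has $T_j$ lower her bid to any $b \in (r,\tau)$: she still wins $e_j$ and gains $\tau - b$, whose supremum $\tau - r$ must therefore be at most $\varepsilon$. Your appeal to \cref{lem:deviation-sup} is just a repackaging of this same supremum argument.
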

\begin{proof}
    For any $b \in (r, \tau)$, the deviation in which $T_j$ bids $b$ still wins
    $e_j$ and yields utility $50 - b$, against the current $50 - \tau$. The gain
    $\tau - b$ has supremum $\tau - r$, so $\tau - r > \varepsilon$ would result in a deviation gaining strictly more than $\varepsilon$.
\end{proof}
Now we can prove the following lemma, characterizing the winners of the element items and their prices in any $\varepsilon$-PNE.

\begin{lemma}\label{lem:floor}
    In any $\varepsilon$-PNE, every element $e_j$ is won by a set bidder at a price
    $p_j \geq 50 - \varepsilon$.
\end{lemma}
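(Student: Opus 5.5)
The plan is to case on who wins $e_j$ and rule out $T_j$ and $E_j$ by exhibiting a profitable deviation; the price bound then follows from $T_j$'s own equilibrium condition. Throughout I keep the integer-scaled values from the proof of \cref{lem:hardcore}, with $\varepsilon \leq 4$. Fix an $\varepsilon$-PNE $\vec{\alpha}$ and an element item $e_j$. Its only interested bidders are the set bidders $A_i$ with $e_j \in S_i$ (value $100$), $E_j$ (value $50$) and $T_j$ (value $50$). In particular, $E_j$ and $T_j$ bid at most $50$ on $e_j$. Since these bidders value $e_j$ positively, the item is always allocated to one of them. Also, $E_j$ always wins her uncontested item $t_j$, at price $100\alpha_{E_j}$.

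\emph{Case 1: $E_j$ wins $e_j$ at price $\tau$.} Then $\alpha_{E_j} = \tau/50$, and her utility is $150(1-\tau/50) = 150 - 3\tau$. Deviating to $\alpha = 0$ keeps $t_j$ for free and yields $100$. So equilibrium forces $150 - 3\tau \geq 100 - \varepsilon$, i.e.\ $\tau \leq (50+\varepsilon)/3$. Now $T_j$ currently wins nothing and has utility $0$. By \cref{lem:deviation-sup}, her best-response payoff is at least $50 - \tau \geq 50 - (50+\varepsilon)/3$, which exceeds $\varepsilon$ for $\varepsilon \leq 4$. This is a contradiction.

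\emph{Case 2: $T_j$ wins $e_j$ at price $\tau \leq 50$.} The first step is to show that $\tau$ is not small, using $E_j$. Her utility is $100(1-\alpha_{E_j})$. Comparing it with the deviation to $0$ gives $\alpha_{E_j} \leq \varepsilon/100$, so her bid on $e_j$ is at most $\varepsilon/2$. Comparing it with the deviation to $\tau/50$ (she ties $T_j$ and wins $e_j$ by lexicographic priority) gives $100 \geq 150 - 3\tau - \varepsilon$, i.e.\ $\tau \geq (50-\varepsilon)/3$.

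The second step invokes \cref{obs:undercut}: the highest competing bid $r$ on $e_j$ satisfies $r \geq \tau - \varepsilon$. Because $E_j$ bids at most $\varepsilon/2 < \tau - \varepsilon$, this bid $r$ must come from some set bidder $A_i$, whose multiplier is therefore $r/100$.

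The third step, which I expect to be the main obstacle (bounding the collateral cost of a raise), is to let $A_i$ deviate to $\tau/100$. She then ties $T_j$ and wins $e_j$ by priority, and she keeps every item she already won. Her total value is at most $600$, so raising the multiplier by $(\tau - r)/100 \leq \varepsilon/100$ costs at most $6\varepsilon$ on previously won items. Meanwhile she gains $e_j$ with utility $100 - \tau \geq 50$. The net gain is therefore at least $50 - 6\varepsilon > \varepsilon$ for $\varepsilon \leq 4$, a contradiction.

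\emph{Price bound.} Hence $e_j$ is won by a set bidder at some price $p_j$, and $T_j$ wins nothing, so her utility is $0$. If $p_j < 50$, then by \cref{lem:deviation-sup} her best-response payoff is $50 - p_j$. The $\varepsilon$-PNE condition thus gives $50 - p_j \leq \varepsilon$, i.e.\ $p_j \geq 50 - \varepsilon$. If $p_j \geq 50$, the bound holds trivially.
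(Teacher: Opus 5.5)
Your proof is correct, but it is organized differently from the paper's.

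The paper first establishes the price floor for all element items at once, through a global argument. It picks an element item of \emph{minimum} price $p$. In the hard subcase, where $T_j$ wins and the runner-up is a set bidder $A_i$ bidding $r \geq p-\varepsilon$, minimality shows that $A_i$ loses all her other elements, so dropping to $\alpha_{A_i}=0$ gains $3r$. Only then, with the floor in hand, does the paper exclude $E_j$ and $T_j$ as winners. It again makes the near-tied set bidder drop to $0$: once $r \geq 50-2\varepsilon$, she earns at most $250+10\varepsilon$.

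You reverse the order. You exclude $E_j$ and $T_j$ without any floor, and the floor then follows from $T_j$'s own best-response condition once she wins nothing. The decisive difference is the case where $T_j$ wins. Instead of having $A_i$ drop out, you have her \emph{raise} her multiplier to $\tau/100$, and you charge the collateral cost to her total value, at most $\frac{\varepsilon}{100}\cdot 600 = 6\varepsilon$. This deviation is profitable regardless of what else $A_i$ wins. As a result, your argument stays local to the single item $e_j$ and avoids both the minimum-price trick and the split on whether $p<\frac{50-\varepsilon}{3}$. It also only needs $50-6\varepsilon>\varepsilon$, a weaker requirement than the paper's $50-10\varepsilon>\varepsilon$.

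What the paper's route buys is uniformity. Set bidders are only ever tested against the drop-to-zero deviation to their uncontested safe item, which is the same mechanism used in the rest of the reduction.

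The one inequality you leave implicit, $\varepsilon/2 < \tau-\varepsilon$, follows from $\tau \geq \frac{50-\varepsilon}{3}$ and $\varepsilon\leq 4$.
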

\begin{proof}
We start by proving the following claim.
\begin{claim}
    In any $\varepsilon$-PNE, the price of each element item is at least $50 - \varepsilon$.
\end{claim}
\begin{proof}
    Let $p \coloneqq \min_{j'} p_{j'}$ be the smallest price over all element items, attained at some element $e_j$; in particular $p_{j'} \ge p$ for every element $e_{j'}$.
    It suffices to show $p \ge 50 - \varepsilon$, since then every element price is at least $50 - \varepsilon$.
    For the sake of contradiction assume that there exists element item $e_j$ with price $p$ such that $p < 50 - \varepsilon$ and let $g$ be its winner.
    We exhibit a deviation gaining strictly more than $\varepsilon$, contradicting the $\varepsilon$-PNE property.

    First, if $g \neq T_j$, then $T_j$ deviating to bid $p + \eta$ (for a small $\eta$) wins $e_j$ for utility $50 - p - \eta$.
    As she currently gets $0$ from $e_j$ and $50 - p > \varepsilon$, her utility increases by $50 - p - \eta > \varepsilon$.

    If $g = T_j$, let $r < p$ be the highest competing bid.
    By \cref{obs:undercut}, $r \geq p - \varepsilon$.
    \begin{itemize}
        \item[-] If $p < \tfrac{50 - \varepsilon}{3}$: the element bidder $E_j$ deviating to bid $p$ for element item $e_j$ wins both $e_j$ (since as mentioned before, ties in the analysis of the best-response benefit the deviator) and her uncontested safe item, for utility $150 - 3p$.
        Her current utility is at most $100$, so she gains at least $50 - 3p > \varepsilon$.
        \item[-] If $p \geq \tfrac{50 - \varepsilon}{3}$: then $r \geq p - \varepsilon
        \geq \tfrac{50 - 4\varepsilon}{3}$, and the runner-up is either $E_j$ or a
        set bidder.
        \begin{itemize}
            \item If the runner-up is $E_j$, she loses $e_j$ while paying $2r$ for her safe
            item. In this case reducing her multiplier to $\alpha_{E_j} = 0$ gains her $2r \geq \tfrac{100 -
            8\varepsilon}{3} > \varepsilon$ gain (since $\varepsilon \leq 4$).
            \item If the runner-up is a set bidder $A_i$ bidding $r$, then $A_i$ bids $r$ on
            each of her element items.
            Any element items that she values positively other than $e_j$ have price at least $p > r$ (minimality), so $A_i$ loses all of them and wins only
            her safe item, and gains utility of $300 - 3r$. However, by reducing her multiplier to $\alpha_{A_i} = 0$ she
            gains $3r \geq 50 - 4\varepsilon > \varepsilon$ (since $\varepsilon \leq 4$).
        \end{itemize}
    \end{itemize}
    Each case contradicts the $\varepsilon$-PNE property, so $p \geq 50 - \varepsilon$.
\end{proof}

    Next, we show that no element is won by a threat or element bidder.
    Fix $e_j$, whose winning bid is $p_j \geq 50 - \varepsilon$, as we just proved.
    If $E_j$ wins $e_j$, then $\alpha_{E_j} \geq \tfrac{50 - \varepsilon}{50}$, so her utility is at most $150 (1- \alpha_{E_j}) \leq 3\varepsilon$ which means by reducing her multiplier to $0$ she gains at least $100 - 3\varepsilon > \varepsilon$ (since $\varepsilon \leq 4$).
    If $T_j$ wins the item, her utility is at most $\varepsilon$, and by
    \cref{obs:undercut} the runner-up bids $r \geq 50 - 2\varepsilon$. If that
    runner-up is $E_j$, she loses $e_j$ while paying at least $100 - 4\varepsilon$
    for her safe item, so she has utility at most $4\varepsilon$ and she can increase her utility by at least $100 - 4\varepsilon > \varepsilon$ by reducing her multiplier to $\alpha_{E_j} = 0$.
    If instead that runner-up is a set bidder $A_i$ bidding $r \geq 50 - 2\varepsilon$, she currently loses $e_j$ and wins at most her safe item and two
    elements, for utility at most $(300 - 3r) + 2(100 - r) = 500 - 5r \leq 250 + 10\varepsilon$; by reducing her multiplier to $\alpha_{A_i} = 0$ she gains at least $50 - 10\varepsilon > \varepsilon$, which holds for $\varepsilon \leq 4$.
    The only remaining option is for the winner to be a set bidder.
\end{proof}

\paragraph{Correctness.}
The reduction is clearly polynomial. We show that the auction instance admits an
    $\varepsilon$-PNE iff $(U, \mathcal{S})$ has an exact cover.

    $\Leftarrow:$ Let $\mathcal{S}^\ast$ be an exact cover. 
    We will call all set bidders $A_i$ with $S_i \in \mathcal{S}^\ast$ \emph{covering} and all other set bidders \emph{non-covering}.
    Consider the
    profile with $\alpha_{A_i} = \tfrac12$ for $S_i \in \mathcal{S}^\ast$,
    $\alpha_{A_i} = 0$ for $S_i \notin \mathcal{S}^\ast$, $\alpha_{T_j} = 1$, and
    $\alpha_{E_j} = 0$ for all $j$; we check it is an exact PNE (hence an
    $\varepsilon$-PNE).
    A covering bidder $A_i$ bids $50$ on each of her three elements, tying the corresponding threat bidders and winning due to the lexicographic tie-breaking, yielding a total utility of $600(1-\alpha_{A_i}) = 300$. 
    By setting $\alpha_{A_i} \geq \tfrac12$, $A_i$ wins all four items for utility $600(1 - \alpha) \leq 300$, and for $\alpha_{A_i} < \tfrac12$ she loses all elements to the threat bidders and just keeps the safe item for
    utility $300(1 - \alpha) \leq 300$, so $\alpha_{A_i} = \tfrac12$ is the optimal action for $A_i$.
    A non-covering set bidder wins her safe item for free at $\alpha = 0$ (utility $300$); winning any of her already-covered elements needs $\alpha \geq \tfrac12$, giving utility of at most $600(1-\alpha) \leq 300$.
    Each threat bidder $T_j$ ties the covering bidder but loses by priority, so her utility is $0$ and, capped at a bid of $50$, she can never win $e_j$.
    Each element bidder $E_j$ wins her safe item for
    free (utility $100$); winning $e_j$ would require outbidding $50$, i.e.\
    $\alpha_{E_j} = 1$, costing $100$ on the safe item for utility $0$.
    Therefore, no bidder has a profitable deviation and this profile is an exact PNE (and thus an $\varepsilon$-PNE).

    $\Rightarrow:$ Let $\vec{\alpha}$ be an $\varepsilon$-PNE.
    By \cref{lem:floor}, every element is won by a set bidder bidding at least $50 - \varepsilon$.
    We will call a set bidder \emph{active} if $\alpha_{A_i} \geq \tfrac{50 - \varepsilon}{100}$, i.e.\ she bids at least $50 - \varepsilon$ on
    her elements. Every element is then won by an active set bidder at equilibrium.

    Suppose there were two active set bidders for some element $e_j$; the one losing it, say $A_q$, bids at least $50 - \varepsilon$ yet wins at most her safe item and two elements, for utility at most $500 (1-\alpha_{A_q}) \leq 250 + 5\varepsilon$.
    Reducing her multiplier to $\alpha_{A_q} = 0$ wins her safe item for free and
    gains at least $50 - 5\varepsilon > \varepsilon$ (since $\varepsilon \leq 4$), contradicting the $\varepsilon$-PNE property. Hence, the active set bidders are pairwise disjoint, so each element item is won by the unique active set bidder containing it. 
    Since an active set bidder bids at least $50 - \varepsilon$ on all three of her elements, she wins all three.
    The active set bidders thus form pairwise-disjoint $3$-sets covering all $3k$ elements: an exact cover of size $k$.
    This concludes the proof of correctness of the reduction from X3C to the $\varepsilon$-PNE problem.

    Finally, scaling every valuation by the inverse of the maximum valuation, i.e., dividing by $300$, yields the correct result for our original setting, where valuations are within $[0,1]$.
    The corresponding slack that we retrieve here is then $\tfrac{\varepsilon}{300}=\tfrac{1}{75}$.
    The instance has $3k+M+3k = 6k+M$ items and $M+6k$ bidders, so the two directions established above are exactly the two cases of the statement, and this concludes our proof of \cref{lem:hardcore}.
\end{proof}

The hard core on its own already proves NP-hardness for every fixed $\varepsilon \leq \frac{1}{75}$. To go beyond a constant, we inflate the gap using \cref{lem:replication}: taking $r$ item-disjoint copies of $\mathcal{I}$ multiplies every deviation gain by exactly $r$, so the replicated instance inherits an exact PNE in the positive case, and admits no $\varepsilon$-PNE for any $\varepsilon \leq \frac{r}{75}$ in the negative one. It has $m = r\,m_0$ items, so the slack we can certify is
\[
    \varepsilon \;=\; \frac{r}{75} \;=\; \frac{1}{75}\cdot\frac{m}{m_0},
\]
and it only remains to choose $r$. Since a polynomial-time reduction can afford any $r$ polynomial in $m_0$, taking $r = \lceil m_0^{(1-\delta)/\delta} \rceil$ yields $\varepsilon = \Theta(m^{1-\delta})$, which is \cref{thm:eps-np-hard}. The same identity marks the limit of this approach: an $\varepsilon$ linear in $m$ would require a hard core of constant size, which \cref{thm:const-m-poly} places in $\mathrm{P}$. We now give the formal proof.

\begin{proof}[Proof of \cref{thm:eps-np-hard}]
    We reduce from X3C. Given $(U,\mathcal{S})$, run the algorithm of \cref{lem:hardcore} to obtain the hard core $\mathcal{I}$, with $m_0 = 6k+M$ items. Write $c \coloneq \frac{1-\delta}{\delta} > 0$, set
    \[
        r \coloneq \left\lceil m_0^{\,c} \right\rceil,
    \]
    and output the instance $\mathcal{I}^{(r,0)}$ of $r$ item-disjoint copies of $\mathcal{I}$, together with the slack $\varepsilon \coloneq \frac{r}{75}$.

    \emph{The reduction runs in polynomial time.} Computing $\mathcal{I}$ is polynomial by \cref{lem:hardcore}, and $m_0$ is linear in the size of $(U,\mathcal{S})$. As $\delta$ is a constant, so is $c$, and hence $r \leq m_0^{\,c}+1$ is polynomial in $m_0$. The instance $\mathcal{I}^{(r,0)}$ has the same $n_0$ bidders and $m = r m_0$ items, and its description consists of $n_0 \cdot r m_0$ values, each a copy of a value of $\mathcal{I}$ and hence of the same bit-length; the whole description therefore has polynomial size and is computed in polynomial time. Finally, $\varepsilon = \frac{r}{75}$ is a rational of polynomial bit-length, and all values of $\mathcal{I}^{(r,0)}$ lie in $[0,1]$, being copies of values of $\mathcal{I}$.

    \emph{Correctness.} Note that $\frac{\varepsilon}{r} = \frac{1}{75}$. If $(U,\mathcal{S})$ admits an exact cover, then by \cref{lem:hardcore} the instance $\mathcal{I}$ admits an exact PNE, which is in particular a $\frac{1}{75}$-PNE of $\mathcal{I}$; by \cref{lem:replication} it is therefore an $\varepsilon$-PNE of $\mathcal{I}^{(r,0)}$. If $(U,\mathcal{S})$ admits no exact cover, then $\mathcal{I}$ admits no $\frac{1}{75}$-PNE, and since by \cref{lem:replication} the $\varepsilon$-PNE of $\mathcal{I}^{(r,0)}$ are exactly the $\frac{\varepsilon}{r}$-PNE of $\mathcal{I}$, the replicated instance admits no $\varepsilon$-PNE either. Hence $\mathcal{I}^{(r,0)}$ admits an $\varepsilon$-PNE if and only if $(U,\mathcal{S})$ admits an exact cover.

    \emph{The slack is $\Theta(m^{1-\delta})$.} By construction $\varepsilon = \frac{r}{75} = \frac{1}{75}\cdot\frac{m}{m_0}$. Since $\delta = \frac{1}{c+1}$ and $\frac{c}{c+1} = 1-\delta$, and since $m_0^{\,c} \leq r \leq m_0^{\,c}+1 \leq 2m_0^{\,c}$, multiplying by $m_0$ gives
    \[
        m_0^{\,c+1} \;\leq\; m = r m_0 \;\leq\; 2\,m_0^{\,c+1},
        \qquad\text{hence}\qquad
        \left(\tfrac{m}{2}\right)^{\delta} \;\leq\; m_0 \;\leq\; m^{\delta}.
    \]
    As $\varepsilon = \frac{1}{75}\cdot\frac{m}{m_0}$ is decreasing in $m_0$ for fixed $m$, the upper bound on $m_0$ gives $\varepsilon \geq \frac{1}{75}\cdot\frac{m}{m^{\delta}} = \frac{1}{75}m^{1-\delta}$, and the lower bound gives
    \[
        \varepsilon \;\leq\; \frac{1}{75}\cdot\frac{m}{(m/2)^{\delta}} \;=\; \frac{2^{\delta}}{75}\,m^{1-\delta} \;\leq\; \frac{2}{75}\,m^{1-\delta},
    \]
    which are the two bounds in the statement. In particular $\varepsilon = \Theta(m^{1-\delta})$, with constants depending only on $\delta$.
\end{proof}

\subsection{Polynomial-time algorithm for constant \texorpdfstring{$m$}{m}}
We now provide an efficient algorithm for the case where the number of items $m$ is constant.
The idea is to write the equilibrium conditions as a linear program, a solution to which would then yield an $\varepsilon$-PNE.
To achieve this, we need to consider the bidders' best-response conditions and make sure that we find $\vec \alpha$ in which each agent is playing her best response up to an $\varepsilon$. To do that we have to compare each bidders' current utility to her utility when deviating to a new multiplier, and since this is a continuous space it cannot be simply done in an LP.

A useful observation that will help us here is that when thinking about the best-responses of bidder $i$ possibly deviating from $\alpha_i $ to $\alpha'_i$, if $\vec \alpha_{-i}$ is known and we start by setting $\alpha'_i = 0$ and keep increasing it to 1, $i$ starts to win the items one by one in a specific order that depends on the valuations and the multipliers of the rest of the bidders. The candidate best-responses of $i$ are values of $\alpha'_i$ in which she starts to win a new item. Another way of thinking about this is that if we knew said ordering of the items and the highest bidder of each item w.r.t. $\vec \alpha_{-i}$, we could compute all values of $\alpha'_i$ that are candidates to be the best-response of $i$ to $\vec \alpha_{-i}$, and also compute bidder $i$'s utility in these values as a linear function of the valuations and the multipliers of the rest of the bidders. Note that since bidder $i$ wins the items one by one when increasing her multiplier from $0$ to $1$, there are at most $m+1$ different candidate values (including $\alpha'_i = 0$). We can simply add the constraint that the utility of bidder $i$ in each of these candidate best-responses should be less than her utility when setting her multiplier to $\alpha_i$. If we do the same for all the bidders, the LP will make sure that $\vec \alpha$ is an equilibrium.

We use this observation to define a configuration of the problem as a choice of the highest bidder and the runner-up of each item alongside an ordering for each bidder. We show that given a configuration we can use the above observation to write an LP to check whether an $\varepsilon$-PNE with such a configuration exists. In addition, we show that with a constant number of items there are only $\mathcal{O}(n^{4m})$ configurations to consider, and we can run a separate LP for each of them and solve the problem in polynomial time.

\begin{theorem}\label{thm:const-m-poly}
    For any constant number of items $m$ and any constant $\varepsilon \ge 0$, the problem of deciding the existence of an $\varepsilon$-PNE in an APS game, and computing one if it exists, is in $\mathrm{P}$.
\end{theorem}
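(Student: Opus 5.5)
We plan to enumerate \emph{configurations}, following the outline above, and to solve one linear program per configuration. A configuration consists of two parts. The first is, for each item $j$, a choice of winner $w_j$ and runner-up $r_j$ among the bidders, or a marker that the item is unvalued or has a single interested bidder. The second is, for each bidder $i$, a permutation $\pi_i$ of the items, meant to order her thresholds $t_{i,j}$. Fixing only the winner and runner-up per item is enough. The highest competing bid that bidder $i$ faces on item $j$ is $\alpha_{w_j} v_{w_j,j}$ if $i\neq w_j$, and $\alpha_{r_j} v_{r_j,j}$ if $i=w_j$. Hence every threshold $t_{i,j}$ is a fixed linear function of $\vec\alpha$ once the configuration is fixed. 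The number of winner/runner-up choices is $(n+1)^{2m}$. Naively the orderings contribute $(m!)^n$, which is not polynomial, so this is the step to handle with care. For bidder $i$, the threshold on item $j$ is determined by one of at most two bidders ($w_j$ or $r_j$), so the order of her thresholds is a function of the pairwise comparisons between the values $\alpha_{w_j}v_{w_j,j}/v_{i,j}$. I would therefore not enumerate $\pi_i$ independently per bidder. Instead, I would enumerate an ordering of the at most $2m$ distinct relevant ``competitor bids'' together with the ratios $v_{i,j}/v_{i,j'}$. Since these ratios differ across bidders, the cleanest fix is to enumerate, for each bidder, only $\pi_i$, but to note that bidders whose valuation vectors induce the same comparisons share orderings.

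If that bound fails, the fallback is the following. Each threshold comparison $t_{i,j}\le t_{i,j'}$ is a sign condition on a linear form in at most $4$ of the variables, which can be written as $\alpha_a c \le \alpha_b c'$. There are only $\mathcal{O}(m^2)$ distinct forms per bidder, but $n$ bidders. Alternatively, one can avoid orderings entirely. The best-response constraint of Lemma~\ref{lem:deviation-sup} says that for every $k$ and every choice of the $k$-th threshold item, $u_i(\vec\alpha) \ge (1-t_{i,j})V_i(\{j' : t_{i,j'}\le t_{i,j}\})-\varepsilon$. I would replace this with the stronger family
\[
u_i(\vec\alpha)\;\ge\;(1-t_{i,j})\,V_i(S)-\varepsilon \qquad\text{for all } j \text{ and all } S\ni j \text{ with } t_{i,j'}\le t_{i,j}\ \forall j'\in S .
\]
This family is implied by the original one, since $V_i$ is monotone. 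Conversely, it implies the original constraints by taking $S$ to be the exact prefix. The problem is that the side condition $t_{i,j'}\le t_{i,j}$ is not an LP constraint. The ordering is still needed, but only through which prefix is ``active''. So I would enumerate, globally, a total preorder of the $\mathcal{O}(nm)$ bid values $\alpha_i v_{i,j}$? That is too many as well. I will commit to the following remedy: since $m$ is constant, the relevant competitor bids for item $j$ are just the two values $\alpha_{w_j}v_{w_j,j}$ and $\alpha_{r_j}v_{r_j,j}$. Each non-winner bidder $i$'s threshold vector is therefore $(\beta_j/v_{i,j})_j$, with $\beta_j$ the winning bid; only winners see the runner-up bids instead. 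I would guess the actual values' order type only up to the $m$ winning bids $\beta_j$, and then use the bidder-specific permutations $\pi_i$ in constraints of the form $\beta_{\pi_i(k)}v_{i,\pi_i(k+1)}\le \beta_{\pi_i(k+1)}v_{i,\pi_i(k)}$. These are linear in $\vec\alpha$. The point is that $\pi_i$ is not guessed but is determined by the guessed \emph{point} of the arrangement of $\mathcal{O}(nm^2)$ hyperplanes in the $m$-dimensional space of $(\beta_1,\dots,\beta_m)$ (plus the $\le m$ runner-up values for winners). An arrangement of $N$ hyperplanes in constant dimension $d\le 2m$ has $\mathcal{O}(N^{d})$ cells, which is polynomial, and its cells can be enumerated in polynomial time. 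This cell enumeration is the main obstacle; I would carry it out via the standard incremental construction or by enumerating sign vectors through $d$-subsets of hyperplanes.

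Given a cell (which fixes winners, runners-up, and every bidder's threshold order), I would write the LP. The variables are $\vec\alpha\in[0,1]^n$, together with a slack variable $\delta\le 1$ to be maximized. Its constraints are the following. First, the winner/runner-up bid inequalities, including every other bidder's bid $\le$ the runner-up bid, with strict inequalities replaced by $\ge\delta$-gaps according to lexicographic tie-breaking. Second, the cell's sign conditions on the order of the thresholds. Third, for each bidder $i$ and each prefix $k$, the linear constraint $u_i(\vec\alpha)\ge (1-t_{i,\pi_i(k)})V_i(S_{i,k})-\varepsilon$. Both $u_i$ and the right-hand sides are linear here. By the remark after Lemma~\ref{lem:deviation-sup}, infeasible prefixes with $t>1$ can be kept without harm. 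Any solution with $\delta>0$ realizes the configuration, and by Lemma~\ref{lem:deviation-sup} it is an $\varepsilon$-PNE. Conversely, any $\varepsilon$-PNE lies in some cell and yields a feasible point with $\delta>0$. Polynomially many LPs, each of polynomial size, can be solved in polynomial time, and we output a solution if one is found.
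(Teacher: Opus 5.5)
Your committed approach is correct and is essentially the paper's: fix a winner and a runner-up for each item, observe that every threshold becomes a linear function of at most $2m$ quantities, enumerate the polynomially many cells of the resulting arrangement of $\mathcal{O}(nm^2)$ hyperplanes to fix all threshold orders, and solve one slack-maximizing LP per configuration. The paper differs only cosmetically: it uses the active multipliers as coordinates, guesses the active bidders' permutations directly and sets inactive bidders to $0$ without loss of generality, whereas your bid coordinates $(\beta,\gamma)$ treat all bidders uniformly and need no zeroing step (at the cost of an LP with $n+1$ variables). You should state that an equilibrium lies in the \emph{closure} of some cell, which your weak ordering constraints already allow, and drop the abandoned first attempts from the write-up.
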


\begin{proof}
    To compute the best-response of bidder $i$ in an APS game, we only need to know the highest (in case $i$ would like to compete for it) or second highest (in case $i$ is already winning but wants to win at a lower price) bid of each item.
    Consequently, in any equilibrium, at most $2m$ bidders actively affect the allocation and the candidate best-responses, while the remaining bidders win nothing.
    We will construct a polynomial-time algorithm that enumerates all possible allocations.
    For each allocation, we will formulate the equilibrium conditions as a Linear Program of polynomial size to the input.

    An item that no bidder values positively is never allocated and never enters a prefix of any bidder, so we discard such items. For every remaining item $j$, let $N_j \coloneq \{ i \in \mathcal{N} : v_{i,j} > 0 \}$ be the set of bidders that value it, and for every bidder $i$, let $M_i \coloneq \{ j \in [m] : v_{i,j} > 0 \}$ be the set of items that she values.
    We begin by enumerating the following two sets, consisting of the winners and the runners-up for each auction:
    \begin{enumerate}
        \item \textbf{The Winners:} A vector $\vec{w} = (w_1, w_2, \dots, w_m)$ assigning a winning bidder $w_j \in N_j$ to each item $j$.
        \item \textbf{The Runners-up:} A vector $\vec{r} = (r_1, r_2, \dots, r_m) \in [n]^m$ assigning a runner-up bidder $r_j$ to each item $j$, such that $r_j \neq w_j$.
    \end{enumerate}
    Let $A = \bigcup_{j=1}^m \{w_j, r_j\}$ be the set of \emph{active bidders}. Note that the size of this set is bounded by $2m$ which is a constant. We set $\alpha_k = 0$ for every \emph{inactive bidder} $k \notin A$; as we show in the correctness argument below, this is without loss of generality.

    To ensure the equilibrium conditions hold for each bidder, we have to compute their utility when deviating to a new multiplier.
    However, it is not straightforward how to write this as a linear constraint to the LP.
    Instead, we will fix the order in which they would win items if they increase their pacing multipliers.
    For an active bidder, we guess this order directly:
    \begin{enumerate}
        \setcounter{enumi}{2}
        \item \textbf{The Threshold Orderings:} For each active bidder $i \in A$, we guess a permutation $\pi_i$ of the items $M_i$. This permutation represents the sequence in which items are secured as bidder $i$'s multiplier $\alpha_i$ increases. Let $\Pi = \{\pi_i\}_{i \in A}$ be the set of these permutations.
    \end{enumerate}
    For the inactive bidders we cannot do the same, since guessing their orders independently would give $(m!)^{\,n - \cards{A}}$ choices, which is exponential in $n$. Instead, we only consider the orders that can occur together.
    The highest bid that an inactive bidder $k$ faces on an item $j \in M_k$ is the one of the winner, so her threshold is $t_{k,j} = \alpha_{w_j} v_{w_j,j} / v_{k,j}$, a linear function of the active multipliers, which range over a space of dimension $\cards{A} \le 2m$. The order of $t_{k,j}$ and $t_{k,j'}$ changes only across the hyperplane $t_{k,j} = t_{k,j'}$, and there are $K = \mathcal{O}(n m^2)$ such hyperplanes.
    An arrangement of $K$ hyperplanes in dimension $d$ has at most $\sum_{i=0}^{d} \binom{K}{i} = \mathcal{O}(K^d)$ full-dimensional regions~\citep{buck1943partition}, which can be listed in time polynomial in $K$ for fixed $d$~\citep{edelsbrunner1987algorithms}; here that is $\mathcal{O}\big((n m^2)^{2m}\big) = \mathcal{O}(n^{2m})$ regions for constant $m$.
    The order of the thresholds of every inactive bidder is the same at all points of a full-dimensional region $C$, and it remains a valid weak order on the closure $\overline{C}$, where some of these thresholds may become equal. Since the closures of the full-dimensional regions cover the whole space, every choice of the active multipliers lies in at least one of them. This gives the last part of the configuration:
    \begin{enumerate}
        \setcounter{enumi}{3}
        \item \textbf{The Inactive Region:} A full-dimensional region $C$ of this arrangement. For each inactive bidder $k$, let $\pi_k$ be the order of her thresholds on the items $M_k$ in $C$.
    \end{enumerate}

    There are $n^m$ ways to choose the winners $\vec{w}$ and at most $n^m$ ways to choose the runners-up $\vec{r}$. For a fixed set $A$ of size at most $2m$, there are $(m!)^{\cards{A}} \le (m!)^{2m}$ ways to choose the threshold permutations $\Pi$, and $\mathcal{O}(n^{2m})$ ways to choose the region $C$. Thus, the total number of configurations to check is $\mathcal{O}(n^{4m} \cdot (m!)^{2m})$. Since $m$ is constant, this is bounded by $\mathcal{O}(n^{4m})$, which is polynomial in $n$. Our algorithm enumerates all such configurations. Formally, a \emph{configuration} is a tuple $(\vec{w}, \vec{r}, \Pi, C)$ that fixes a winner $w_j$ and a runner-up $r_j$ for every item $j \in [m]$, a threshold ordering $\pi_i$ for every active bidder $i \in A$, and a region $C$ that fixes the threshold orderings of the inactive bidders.

    \paragraph{The Linear Program Formulation.}
    We will now proceed to writing the LP for each configuration $(\vec{w}, \vec{r}, \Pi, C)$, in order to find a valid strategy profile $\vec{\alpha}$. The variables of the LP are the multipliers $\alpha_i \in [0,1]$ for the active bidders $i \in A$. The constraints are defined as follows:

\textbf{1. Allocation Constraints:}
    We must enforce that our guess for the winners and runners-up is correct. Under the lexicographic rule~\eqref{eq:lex-winner}, bidder $w_j$ wins item $j$ exactly when no bidder that values the item bids more than her, and every such bidder with a smaller index bids strictly less. A bidder that does not value item $j$ bids $0$ on it and never wins it, so she imposes no condition. For each item $j \in [m]$:
    \begin{align}
        \alpha_{w_j} v_{w_j, j} &\ge \alpha_i v_{i, j} \qquad \forall i \in N_j \setminus \{w_j\}, \text{ strictly if } i < w_j, \label{eq:m_winner} \\
        \alpha_{r_j} v_{r_j, j} &\ge \alpha_i v_{i, j} \qquad \forall i \in A \setminus \{w_j, r_j\}. \label{eq:m_runnerup}
    \end{align}
    Constraint~\eqref{eq:m_winner} ranges over all bidders that value item $j$, including the inactive ones. As an inactive bidder bids $0$, her constraint is trivial unless her index is smaller than $w_j$, in which case it requires the bid of $w_j$ to be positive: otherwise all bids on $j$ are $0$, and the item goes to the inactive bidder. The runner-up only needs to have a highest bid among the bidders other than $w_j$, so~\eqref{eq:m_runnerup} need not be strict, and the inactive bidders satisfy it automatically.

    \textbf{Handling strict inequalities.}
    A linear program cannot impose strict inequalities directly, but strict feasibility is decided by a standard slack-maximization step. We introduce a single variable $\delta \ge 0$, replace every strict requirement $\alpha_a v_{a, j} > \alpha_b v_{b, j}$ by
    \begin{equation*}
        \alpha_a v_{a, j} \ge \alpha_b v_{b, j} + \delta,
    \end{equation*}
    leave the remaining inequalities non-strict, and maximize $\delta$ (capped at $\delta \le 1$ to stay bounded) subject to all constraints of the program. If the optimum has $\delta^\ast > 0$, its maximizer satisfies every strict requirement and is a valid profile for the configuration; if $\delta^\ast = 0$, no profile realizes the configuration and it is discarded. This adds a single variable and a single objective, so the program remains polynomial in size.

    \textbf{2. Deviation Threshold Orderings:}
    For an active bidder $i \in A$, the minimum multiplier required to win item $j$ is determined by the highest competing bid. We define the threshold $t_{i,j}$ as:
    \begin{equation}\label{eq:threshold-def}
        t_{i,j} \coloneq \begin{cases}
            \frac{\alpha_{r_j} v_{r_j, j}}{v_{i, j}} & \text{if } i = w_j \\
            \frac{\alpha_{w_j} v_{w_j, j}}{v_{i, j}} & \text{if } i \neq w_j
        \end{cases}
    \end{equation}
    This is the threshold $t_{i,j}$ of~\eqref{eq:threshold} written out for the guessed winners and runners-up, since the highest bid among the bidders other than $i$ is the one of $r_j$ when $i = w_j$, and the one of $w_j$ otherwise; we only use it for the items $j \in M_i$, since by the convention $t_{i,j} = \infty$ for $v_{i,j} = 0$ an item that $i$ does not value never enters one of her prefixes.
    Notice that $t_{i,j}$ is strictly a linear function of either $\alpha_{w_j}$ or $\alpha_{r_j}$. We enforce the guessed threshold ordering $\pi_i$ via the linear constraints:
    \begin{equation}\label{eq:m_ordering}
        t_{i, \pi_i(1)} \le t_{i, \pi_i(2)} \le \dots \le t_{i, \pi_i(\cards{M_i})} \qquad \forall i \in A.
    \end{equation}
    For the inactive bidders, we require the active multipliers to lie in the closure $\overline{C}$ of the region $C$, which is described by one weak linear inequality per hyperplane of the arrangement. This orders the thresholds of every inactive bidder $k$ weakly according to $\pi_k$.

    \textbf{3. Equilibrium (Best-Response) Conditions:}
    By \cref{lem:deviation-sup}, bidder $i$'s best-response payoff is the largest of the values obtained on the prefixes $S_{i,k} = \{\pi_i(1), \dots, \pi_i(k)\}$, each evaluated at the threshold $\alpha_i' = t_{i, \pi_i(k)}$.
    Let $W_i = \{j \in [m] \mid w_j = i\}$ be the set of items bidder $i$ currently wins; for an inactive bidder, $W_i = \emptyset$ and $\alpha_i = 0$. At an $\varepsilon$-PNE she is best-responding up to $\varepsilon$, so her current utility must be within $\varepsilon$ of each of these prefix values. For every bidder $i \in \mathcal{N}$, active or inactive, and every prefix length $k \in \{1, \dots, \cards{M_i}\}$:
    \begin{equation}\label{eq:m_equilibrium}
        (1 - \alpha_i) \sum_{j \in W_i} v_{i,j} \ge \left(1 - t_{i, \pi_i(k)}\right) \sum_{j \in S_{i,k}} v_{i,j} - \varepsilon
    \end{equation}
    Because $t_{i, \pi_i(k)}$ is linear and $\varepsilon$ is a constant, \eqref{eq:m_equilibrium} is purely a linear inequality.
    If several items share a threshold, the ordering $\pi_i$ separates them arbitrarily, which is harmless: a deviation to that threshold wins bidder $i$ some subset of the tied items, so the prefix containing all of them gives the largest right-hand side in~\eqref{eq:m_equilibrium}, and imposing that constraint implies the ones for every other way the tie could fall.

    \paragraph{Correctness.}
    For each of the $\mathcal{O}(n^{4m})$ configurations, the resulting LP has $\cards{A} + 1 \le 2m + 1$ variables (the multipliers $\alpha_i$ for $i \in A$ and the slack $\delta$) and $\mathcal{O}(n m^2)$ linear constraints (\eqref{eq:m_winner} through \eqref{eq:m_equilibrium}, the inequalities describing $\overline{C}$, and the bounds $0 \le \alpha_i \le 1$). Because $m$ is constant, solving this LP takes polynomial time.

    If the LP for some configuration has an optimum with $\delta^\ast > 0$, the resulting $\vec{\alpha}$ (padded with $0$s for bidders outside $A$) constitutes an $\varepsilon$-PNE. By~\eqref{eq:m_winner} and~\eqref{eq:m_runnerup}, every item $j$ is won by $w_j$ and $r_j$ has a highest bid among the other bidders, so the thresholds in~\eqref{eq:threshold-def}, and the ones of the inactive bidders, are the thresholds of~\eqref{eq:threshold}. These are ordered according to the orderings $\pi_i$, and by \cref{lem:deviation-sup} and~\eqref{eq:m_equilibrium} every bidder, active or inactive, is best-responding up to $\varepsilon$. Setting $\varepsilon = 0$ recovers the exact-PNE case.

    Conversely, suppose an $\varepsilon$-PNE $\vec{\alpha}$ exists, and let $w_j$ be the winner of each item $j$ and $r_j$ a bidder with the highest bid on $j$ among the others. We may assume that $\alpha_k = 0$ for every bidder $k$ that is neither a winner nor a runner-up of any item, since setting all these multipliers to $0$ yields another $\varepsilon$-PNE. Indeed, on every item such a bidder bids at most as much as the runner-up, so lowering her bid changes no winner and no price, and hence no bidder's current utility. The thresholds of every other bidder $i$, which depend only on $\max_{k' \neq i} \alpha_{k'} v_{k',j}$, are attained by the winner or the runner-up of each item and are unchanged as well, so by \cref{lem:deviation-sup} so is her best-response payoff. Finally, bidder $k$ herself has utility $0$ before and after the change, while her best-response payoff $u_k^{\ast}(\vec{\alpha_{-k}})$ does not depend on her own multiplier.
    The profile then induces winners $\vec{w}$ and runners-up $\vec{r}$, and we let $\pi_i$ be the order of the thresholds of each active bidder $i$. The active multipliers lie in the closure of at least one full-dimensional region $C$ of the arrangement, and this fixes the orderings of the inactive bidders. In the resulting configuration, $\vec{\alpha}$ satisfies~\eqref{eq:m_winner}, with the strict inequalities holding strictly since $w_j$ is the actual winner of $j$, as well as~\eqref{eq:m_runnerup},~\eqref{eq:m_ordering}, and the constraints of $\overline{C}$. It also satisfies~\eqref{eq:m_equilibrium}: by \cref{lem:deviation-sup} each right-hand side is at most the best-response payoff of the bidder minus $\varepsilon$, and the bidder is best-responding up to $\varepsilon$. Therefore, the optimum of this LP has $\delta^\ast > 0$.

    The procedure therefore runs in time $n^{\mathcal{O}(m)}$, polynomial for constant $m$, and accepts if and only if an $\varepsilon$-PNE exists; whenever it accepts, the feasible $\vec{\alpha}$ found for the accepting configuration (padded with $0$s outside $A$) is an explicit $\varepsilon$-PNE.
\end{proof}

\subsection{Polynomial-time algorithm for constant \texorpdfstring{$n$}{n}}

We will now provide an efficient algorithm for the case where the number of bidders $n$ is a constant.
We begin with the special case of two bidders, which serves as a warm-up: it already exhibits the prefix structure of best-responses on which the general algorithm relies.
We present this case first to convey the intuition, and then turn to a general (constant) number of bidders.

\begin{theorem}
    For any constant $\varepsilon \ge 0$, the problem of deciding the existence of an $\varepsilon$-PNE in an APS game with $2$ bidders, and computing one if it exists, is in $\mathrm{P}$.
\end{theorem}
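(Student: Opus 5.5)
The plan is to exploit the fact that with two bidders the whole profile is captured by the two numbers $(\alpha_1,\alpha_2)\in[0,1]^2$, and that all combinatorial structure (who wins each item, and the orders in which each bidder would acquire items) depends only on how this point sits relative to a small family of lines through the origin. For each item $j$ valued positively by both bidders, the comparison of bids $\alpha_1 v_{1,j}$ versus $\alpha_2 v_{2,j}$ flips exactly across the ray $\alpha_2/\alpha_1 = v_{1,j}/v_{2,j}$. Items valued by only one bidder are won by that bidder unconditionally, with threshold $0$ for her. Hence the relevant state is the ratio $\rho=\alpha_2/\alpha_1$, together with the degenerate boundary cases $\alpha_1=0$ or $\alpha_2=0$. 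The at most $m$ critical ratios $c_j = v_{1,j}/v_{2,j}$, sorted, partition the ratio axis into $O(m)$ open intervals and $O(m)$ critical points.

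Within each such cell, the allocation is fixed. Bidder $1$ wins exactly the items with $c_j>\rho$ (plus ties at a critical point, resolved by~\eqref{eq:lex-winner} in favour of bidder $1$). Moreover, the thresholds of bidder $1$ are $t_{1,j} = \alpha_2 v_{2,j}/v_{1,j} = \alpha_2/c_j$, so her threshold order is simply the order of decreasing $c_j$, \emph{independently of the cell}; symmetrically, bidder $2$'s order is increasing $c_j$. This is the two-bidder instance of the prefix structure of \cref{lem:deviation-sup}. Each bidder has at most $m+1$ prefix deviations, and the prefixes are fixed once and for all. The $\varepsilon$-PNE condition in a fixed cell then becomes a set of linear inequalities in $(\alpha_1,\alpha_2)$. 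Each inequality has the form
\[
(1-\alpha_1)V_1(W_1)\ \ge\ \left(1-\tfrac{\alpha_2}{c_{\pi_1(k)}}\right)V_1(S_{1,k}) - \varepsilon,
\]
and analogously for bidder $2$. As noted after \cref{lem:deviation-sup}, prefixes with thresholds above $1$ may be kept harmlessly. To these we add the linear cell constraints $\alpha_2 \gtrless c_j\alpha_1$ and the box constraints.

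The algorithm therefore enumerates the $O(m)$ cells, plus the boundary cases where one multiplier is $0$ and ties are decided by the positive-value restriction. For each cell it solves a two-variable LP with $O(m)$ constraints. Strict cell inequalities are handled by the slack-maximization step from the proof of \cref{thm:const-m-poly}. The algorithm accepts iff some LP has optimum $\delta^\ast>0$ (or is feasible, for the closed cells), and outputs its solution.

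Correctness in one direction is immediate: any feasible point realizes the guessed allocation, so its prefix values are the true best-response payoff by \cref{lem:deviation-sup}. In the other direction, any $\varepsilon$-PNE lies in some cell and satisfies that cell's program. The main obstacle I expect is bookkeeping at degenerate points. These include cells on a critical ray, where ties are decided lexicographically; profiles with $\alpha_1=0$ or $\alpha_2=0$, where the ratio is undefined and zero bids tie; and items valued by only one bidder. I would treat these as separate closed cells, each with an explicitly determined allocation, rather than folding them into the ratio intervals. Since the total number of cells and constraints is polynomial (indeed $O(m)$ LPs of size $O(m)$), the procedure runs in polynomial time.
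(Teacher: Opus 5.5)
Your proposal is correct and is essentially the paper's proof. The paper enumerates the $m+1$ prefix splits of the items sorted by $v_{2,j}/v_{1,j}$; each split is a half-open interval of your ratio axis, namely an open interval together with one endpoint ray, on which the tied item goes to bidder $1$. It uses the same cell-independent threshold orders for the two bidders and solves one slack-maximizing LP per split. The only real difference is that the paper writes the allocation constraints multiplicatively, as $\alpha_1 v_{1,j} \ge \alpha_2 v_{2,j}$ for bidder $1$'s items and $\alpha_2 v_{2,j} > \alpha_1 v_{1,j}$ for bidder $2$'s, which automatically absorbs the zero-multiplier and tie cases you planned to treat as separate cells. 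This formulation also avoids a slip in your phrasing: $\{\alpha_1 = 0\}$ is not a single cell with a fixed allocation, because at the origin bidder $1$ wins every item she values, while for $\alpha_1 = 0 < \alpha_2$ bidder $2$ wins every item both value.
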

\begin{proof}
    For each item $j$, let $r_j\coloneq \frac{v_{2,j}}{v_{1,j}}$, with the conventions that $r_j = 0$ if $v_{2,j} = 0$ and $r_j = \infty$ if $v_{1,j} = 0$; items that neither bidder values are never allocated and can be discarded.
    Note that, under our tie-breaking rule, an item valued by a single bidder is won by her regardless of the multipliers; such items sit at the two ends of the ordering below, leaving the prefix structure unaffected.
    Since the auctions are independent and run in parallel, the order in which we consider the items does not matter.
    Therefore, assume without loss of generality that the items are in increasing order of $r$.
    This means that, fixing $(\alpha_1,\alpha_2)$, the resulting allocation can be viewed as a prefix of items $S_k=\{1,2,\ldots,k\}$ being allocated to bidder $1$, and the rest of them $M \setminus S_k = \{k+1,\ldots,m\}$ to bidder $2$, for some $k \in \{0,1,\ldots, m\}$.

    Let $D$ be the set of items valued by both bidders. For a split after item $k$ to occur, bidder $1$ must win every item of $S_k$ and bidder $2$ every item of $M \setminus S_k$. This is impossible if $S_k$ contains an item valued only by bidder $2$ or $M \setminus S_k$ contains an item valued only by bidder $1$, so we only consider the values of $k$ for which neither happens; for the remaining items it holds automatically. For the items in $D$, since bidder $1$ wins ties, the split requires
    \begin{equation}\label{eq:ratio}
        \alpha_1 v_{1,j} \geq \alpha_2 v_{2,j} \quad \text{for } j \in S_k \cap D,
        \qquad
        \alpha_2 v_{2,j} > \alpha_1 v_{1,j} \quad \text{for } j \in D \setminus S_k.
    \end{equation}
    Written in this multiplicative form, the condition involves no division, and it remains correct when $\alpha_2 = 0$, in which case bidder $2$ wins no item of $D$. The split results in the following bidder utilities:
    \begin{align*}
       u_1(\alpha_1,\alpha_2) &= (1-\alpha_1) V_1(S_k) \\
       u_2(\alpha_1,\alpha_2) &= (1-\alpha_2) V_2(M \setminus S_k)
    \end{align*}
    where $V_i(X) = \sum_{j \in X} v_{i,j}$.

    We will iterate over all $m+1$ possible splits and check whether there are $(\alpha_1,\alpha_2)$ consistent with each split that satisfy the equilibrium conditions.

    First, we will look at bidder $1$'s deviations.
    Her threshold on an item $j$ she values is $t_{1,j} = \frac{\alpha_2 v_{2,j}}{v_{1,j}} = \alpha_2 r_j$, and $t_{1,j} = \infty$ on an item she does not value. Since each $r_j$ is a constant, these thresholds are linear in $\alpha_2$, and they are ordered by $r$ regardless of the multipliers, so the prefixes of \cref{lem:deviation-sup} for bidder $1$ are the sets $S_\ell$.
    By \cref{lem:deviation-sup}, the best she can obtain from a deviation securing $S_\ell$, for $\ell \geq 1$ with $r_\ell < \infty$, is the value at $\alpha_1' = \alpha_2 r_\ell$, namely $(1-\alpha_2 r_\ell)V_1(S_\ell)$.
    Thus, at an $\varepsilon$-PNE, for any such $\ell$ the following must hold:
    \begin{equation}\label{eq:1BR}
        (1-\alpha_1)V_1(S_k) \geq (1-\alpha_2r_\ell)V_1(S_\ell) - \varepsilon
    \end{equation}

    The second bidder is treated in exactly the same way, the only difference being that she loses, rather than wins, the tie she creates.
    Her threshold on an item $j$ she values is $t_{2,j} = \frac{\alpha_1 v_{1,j}}{v_{2,j}}$, and $t_{2,j} = \infty$ otherwise; these thresholds are ordered by \emph{decreasing} $r$, so her prefixes are the suffixes $M \setminus S_\ell$, and the largest threshold in $M \setminus S_\ell$ is the one of item $\ell+1$.
    For $\ell \leq m-1$ with $v_{2,\ell+1} > 0$, her cheapest deviation securing $M \setminus S_\ell$ is to set the multiplier to $\alpha_2' = \frac{\alpha_1 v_{1,\ell+1}}{v_{2,\ell+1}}$ and tie bidder $1$ on item $\ell+1$, and by \cref{lem:deviation-sup} the best she can obtain from such a deviation is again the value at that multiplier, whether or not it is attained.
    Thus, at an $\varepsilon$-PNE, for any such $\ell$:
    \begin{equation}\label{eq:2BR}
        (1-\alpha_2) V_2(M \setminus S_k) \geq \left( 1 - \frac{\alpha_1 v_{1,\ell+1}}{v_{2,\ell+1}}  \right) V_2(M \setminus S_\ell) - \varepsilon
    \end{equation}
    The remaining deviations, to the empty prefix for bidder $1$ and the empty suffix for bidder $2$, have value $0$ and impose no constraint. If several items share the same ratio, the prefix (or suffix) containing all of them gives the largest value, so the constraints for the prefixes separating them are implied.

    To conclude the proof, notice that we can first compute the $r_j$ values and sort the items in $O(m\log m)$ time and then, for each admissible $k \in \{0,1,\ldots, m\}$, write down the system consisting of at most $2m$ inequalities from \eqref{eq:1BR} and \eqref{eq:2BR} and at most $m$ from \eqref{eq:ratio}, all of them linear in $(\alpha_1,\alpha_2)$.
    As in the proofs of \cref{thm:const-m-poly,thm:const-n-poly}, we handle the strict inequalities of \eqref{eq:ratio} with a single slack variable $\delta$: we replace each of them by $\alpha_2 v_{2,j} \geq \alpha_1 v_{1,j} + \delta$ and maximize $\delta \leq 1$, so that the split is realizable exactly when the optimum is positive; this is a linear program, solved in polynomial time.
    If none of the constructed systems is feasible, then the instance has no $\varepsilon$-PNE; otherwise, solving any one feasible system yields a pair $(\alpha_1,\alpha_2)$ that is itself an $\varepsilon$-PNE, so the algorithm both decides existence and computes an equilibrium whenever one exists.
\end{proof}

We now turn to the case of a constant number of bidders $n$. The roles of $m$ and $n$ are reversed relative to \cref{thm:const-m-poly}: since the number of items $m$ may now be large, we can no longer enumerate allocations directly. Instead we exploit that the strategy space $[0,1]^n$ is low-dimensional. The pairwise comparisons that govern the outcome, who outbids whom on each item and the order in which each bidder would acquire the items, are all linear in the multipliers, so they carve $[0,1]^n$ into a hyperplane arrangement whose cells each fix the entire allocation and best-response structure. For constant $n$ this arrangement has only polynomially many cells, and on each of them we solve a single linear program for an $\varepsilon$-PNE consistent with that cell.

\begin{theorem}\label{thm:const-n-poly}
    For any constant number of bidders $n$ and any constant $\varepsilon \ge 0$, the problem of deciding the existence of an $\varepsilon$-PNE in an APS game, and computing one if it exists, is in $\mathrm{P}$.
\end{theorem}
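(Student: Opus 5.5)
We follow the blueprint sketched just before the statement: partition the strategy space $[0,1]^n$ by a hyperplane arrangement into polynomially many cells on which every combinatorial quantity is fixed, and on each cell solve one linear program. For every item $j$ and every ordered pair of bidders $a \neq b$ with $v_{a,j}, v_{b,j} > 0$ we take the hyperplane $\alpha_a v_{a,j} = \alpha_b v_{b,j}$. This family determines, at every point, the weak order of the bids on every item, hence the winner $w_j$ under the lexicographic rule~\eqref{eq:lex-winner} and, for each bidder $i$, the bidder attaining $\max_{k\neq i}\alpha_k v_{k,j}$.

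For the threshold orderings we add, for every bidder $i$, every pair of items $j,j' \in M_i$, and every choice of the bidders $k,k'$ that attain the competing maxima on $j$ and $j'$, the hyperplane $\alpha_k v_{k,j}/v_{i,j} = \alpha_{k'} v_{k',j'}/v_{i,j'}$. These are linear in $\vec{\alpha}$. Their total number is $K = \mathcal{O}(n^3 m^2)$, polynomial in $m$ for constant $n$. We also add the hyperplanes $\alpha_i = 0$, and, for every bidder $i$ and item $j$ with $v_{i,j}>0$, the hyperplane $\alpha_k v_{k,j} = v_{i,j}$ with $k$ the relevant opponent, which separates $t_{i,j}\le 1$ from $t_{i,j}>1$; the discussion after \cref{lem:deviation-sup} shows this last family is optional. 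By the bound of~\citep{buck1943partition}, the arrangement in dimension $d=n$ has $\mathcal{O}(K^n)$ faces. Using~\citep{edelsbrunner1987algorithms}, we will enumerate all faces of all dimensions, not only the full-dimensional regions. Working with faces, rather than closures of regions as in \cref{thm:const-m-poly}, is what lets us handle equilibria lying on tie hyperplanes: on a relatively open face each hyperplane is fixed as $<$, $=$, or $>$, so the winners, the runner-up structure, and the threshold orders $\pi_i$ with their ties are all constant.

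On a fixed face $F$, we write the LP in the variables $\vec{\alpha}$ and a slack $\delta$, exactly as in the proof of \cref{thm:const-m-poly}. The sign pattern of $F$ becomes equalities together with inequalities of the form $\ge \delta$. The best-response constraints~\eqref{eq:m_equilibrium} are written for every bidder $i$ and every prefix $S_{i,k}$ of the fixed order $\pi_i$. On $F$, the utility $u_i$ is $(1-\alpha_i)V_i(W_i)$ with $W_i$ fixed, and each threshold $t_{i,\pi_i(k)}$ is a fixed linear function, so every constraint is linear. Prefixes with $t>1$ are harmless by the remark following \cref{lem:deviation-sup}. Within a tie class of $\pi_i$ we keep only the full prefix, as argued in \cref{thm:const-m-poly}. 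We then maximize $\delta \le 1$. For soundness: any maximizer with $\delta^\ast>0$ lies in $F$, so by \cref{lem:deviation-sup} it is an $\varepsilon$-PNE. For completeness: any $\varepsilon$-PNE lies in exactly one face, and together with a small $\delta>0$ it is feasible for that face's LP. The algorithm runs $\mathcal{O}((n^3m^2)^n)$ LPs of polynomial size, which is polynomial for constant $n$. Solutions are rational of polynomial bit-length, as in \cref{thm:np-membership}.

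The main obstacle is to verify that fixing the face really freezes every quantity used in \cref{lem:deviation-sup}. The delicate points are the following. The identity of the maximal opponent $\max_{k\neq i}$ depends on comparisons among the other bidders; this is covered by the bid hyperplanes. Items with all bids zero are awarded to the lowest-indexed valuer; this is covered by the $\alpha_i=0$ hyperplanes. Items valued by a single bidder have threshold $0$ or $\infty$; we treat these as constants. Finally, the threshold-order hyperplanes must be included for every possible opponent pair, since we cannot know in advance which opponents attain the maxima; we include all of them, at a cost of only a polynomial factor.
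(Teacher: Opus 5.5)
Your proposal is correct and follows essentially the same route as the paper. Both use the same two families of hyperplanes, namely pairwise bid comparisons and pairwise threshold comparisons over all possible opponents (so $K=\mathcal{O}(n^3m^2)$); both enumerate faces of all dimensions so that equilibria on tie hyperplanes are captured; and both solve one slack-maximizing LP per face with the prefix constraints of \cref{lem:deviation-sup}. Your extra hyperplanes $\alpha_i=0$ and $t_{i,j}=1$ are harmless but redundant: when all valuers of an item bid zero they all tie, so the bid hyperplanes already fix that case. Note also that the $\mathcal{O}(K^n)$ count for faces of all dimensions is the one taken from \citep{edelsbrunner1987algorithms}; the paper uses \citep{buck1943partition} only for full-dimensional regions.
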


\begin{proof}
    Because $m$ is not constant, we cannot simply enumerate all $n^m$ possible item allocations. However, the entire auction outcome and the bidders' strategic incentives are driven entirely by a set of simple pairwise comparisons between the bidders' pacing multipliers similar to what we saw in the case of two agents.

    By identifying all the linear boundaries where these comparisons change, we can partition the continuous space of all possible strategies $[0,1]^n$ into a polynomial number of distinct regions. Within any single region, the allocation of the items is completely locked, allowing us to formulate the equilibrium conditions as a linear program.

    \paragraph{The Linear Boundaries.}
    Two types of structural events dictate the auction's mechanics:
    \begin{enumerate}
        \item \textbf{Allocation Boundaries:} To determine who wins item $j$, we compare bids. Bidder $a$ beats bidder $b$ on item $j$ if $\alpha_a v_{a,j} > \alpha_b v_{b,j}$. The boundary where this swaps is the linear equation:
        \begin{equation}\label{eq:boundary-alloc}
            \alpha_a v_{a,j} - \alpha_b v_{b,j} = 0
        \end{equation}
        There are at most $\binom{n}{2}m$ such boundaries.
        
        \item \textbf{Deviation Threshold Boundaries:} To secure item $j$, bidder $i$ must beat the highest bid of other agents, and who that competitor is depends on whether $i$ already holds the item: define $k_i(j) \coloneq r_j$ if $i = w_j$ and $k_i(j) \coloneq w_j$ if $i \neq w_j$, exactly as in the case split of \eqref{eq:threshold-def}. The threshold multiplier bidder $i$ needs to win the item is then $t_{i,j} = \alpha_{k_i(j)} v_{k_i(j),j} / v_{i,j}$, which is again the threshold of~\eqref{eq:threshold}, with the same convention $t_{i,j} = \infty$ when $v_{i,j} = 0$.
        Bidder $i$'s best-response requires sorting these thresholds to find the optimal prefix of items to target. The order of two thresholds changes when $t_{i,j} = t_{i,j'}$, which forms the linear boundary:
        \begin{equation}\label{eq:boundary-thresh}
            \alpha_a \frac{v_{a,j}}{v_{i,j}} - \alpha_b \frac{v_{b,j'}}{v_{i,j'}} = 0
        \end{equation}
        Considering all bidders $i$, all pairs of items $j, j'$, and all potential competitors $a, b$, there are at most $n \binom{m}{2} n^2$ such boundaries; since $a$ and $b$ already range over all bidders, this count already covers both cases $k_i(j) = r_j$ and $k_i(j) = w_j$ without any adjustment.
    \end{enumerate}

    In total, there are $K = \mathcal{O}(n^3 m^2)$ linear boundaries of the form $\alpha_x - C\alpha_y = 0$ passing through the origin in $\mathbb{R}^n$.

    \paragraph{The Enumeration Space.}
    Each of the $K$ linear equations defines a hyperplane through the origin in $\mathbb{R}^n$, and together they form a \emph{hyperplane arrangement}. This arrangement subdivides $\mathbb{R}^n$ into \emph{faces} of every dimension, and within the relative interior of a face the sign, either $<$, $=$, or $>$, of each of the $K$ equations is constant. We call such a sign vector in $\{<, =, >\}^K$ a \emph{configuration}, and our algorithm enumerates all realizable ones.

    We must account for faces of \emph{all} dimensions, not only the full-dimensional regions carved out by strict inequalities. An equality sign records a tie between two paced bids, which the lexicographic rule resolves in favour of the lower-indexed bidder among those that value the item; an equilibrium may therefore sit exactly on such a tie, that is, on a lower-dimensional face of the arrangement, and the algorithm must enumerate these boundary faces as well. The full-dimensional regions on their own number at most $\sum_{i=0}^{n}\binom{K}{i}$~\citep{buck1943partition}, which is already polynomial in $K$ for constant $n$: since $\binom{K}{i}=\tfrac{K(K-1)\cdots(K-i+1)}{i!}\le K^{i}$,
    \begin{equation}\label{eq:cell-count}
        \sum_{i=0}^{n}\binom{K}{i} \;\le\; \sum_{i=0}^{n}K^{i} \;\le\; (n+1)K^{n} \;=\; \mathcal{O}(K^{n}),
    \end{equation}
    the last step absorbing the $n+1$ terms into the constant, which is valid because $n$ is fixed. The lower-dimensional faces, on which one or more equations vanish, could a priori be far more numerous; however, for fixed $n$ the total number of faces of \emph{all} dimensions in an arrangement of $K$ hyperplanes is likewise $\mathcal{O}(K^{n})$~\citep{edelsbrunner1987algorithms}. Substituting $K = \mathcal{O}(n^3 m^2)$, the number of configurations our algorithm enumerates is at most $\mathcal{O}\big((n^3 m^2)^n\big) = \mathcal{O}(m^{2n})$, which is polynomial in $m$ for constant $n$.

    \paragraph{Formulating the Linear Program.}
    For every realizable configuration, the constant signs of our $K$ equations explicitly dictate the complete structure of the auction. Specifically, they provide:
    \begin{itemize}
        \item A fixed winner $w_j$ and runner-up $r_j$ for every item $j$.
        \item A fixed permutation $\pi_i$ for every bidder $i$, representing the sorted order of their deviation thresholds: $t_{i,\pi_i(1)} \le t_{i,\pi_i(2)} \le \dots \le t_{i,\pi_i(m)}$.
    \end{itemize}
    Given this locked structure, we formulate a Linear Program to find a specific pacing multiplier profile $\vec{\alpha} = (\alpha_1, \dots, \alpha_n)$ inside this region that forms an $\varepsilon$-PNE. The variables are $\alpha_i \in [0,1]$ for all $i \in [n]$.

    \textbf{1. Region Constraints:}
    We must ensure the $\vec{\alpha}$ we find actually lives inside the guessed configuration. For every one of the $K$ boundaries, we enforce the sign dictated by the configuration. For example, if the configuration dictates that bidder $a$ beats bidder $b$ on item $j$, we add the linear constraint:
    \begin{equation}
        \alpha_a v_{a,j} \ge \alpha_b v_{b,j}
    \end{equation}
    When the configuration dictates a \emph{strict} sign (one bidder strictly outbidding another, as opposed to a tie that the lexicographic rule resolves in favour of the lower-indexed bidder among those that value the item) the corresponding inequality must be strict. A linear program cannot impose strict inequalities directly, so we decide strict feasibility by maximizing a slack: we introduce a single variable $\delta \ge 0$, replace each strict constraint $\alpha_a v_{a,j} > \alpha_b v_{b,j}$ by $\alpha_a v_{a,j} \ge \alpha_b v_{b,j} + \delta$, keep the equality signs (which encode lexicographic ties) as equalities, and maximize $\delta$ subject to all constraints, capped at $\delta \le 1$. The configuration is realizable exactly when the optimum satisfies $\delta^\ast > 0$.

    \textbf{2. Equilibrium (Best-Response) Conditions:}
    With the thresholds ordered by $\pi_i$, and ties among them treated as in the proof of \cref{thm:const-m-poly}, \cref{lem:deviation-sup} gives bidder $i$'s best-response payoff as the largest of the values obtained on the prefixes $S_{i,k} = \{\pi_i(1), \dots, \pi_i(k)\}$, each evaluated at the threshold $\alpha_i' = t_{i, \pi_i(k)}$. 
    Let $W_i = \{j \in [m] \mid w_j = i\}$ be the set of items bidder $i$ currently wins in this configuration. At an $\varepsilon$-PNE bidder $i$ is best-responding up to $\varepsilon$, so her current utility must be within $\varepsilon$ of each of these prefix values. For every bidder $i \in [n]$ and every prefix length $k \in \{0, 1, \dots, m\}$:
    \begin{equation}
        (1 - \alpha_i) \sum_{j \in W_i} v_{i,j} \ge \left(1 - t_{i, \pi_i(k)}\right) \sum_{j \in S_{i,k}} v_{i,j} - \varepsilon
    \end{equation}
    Because $t_{i, \pi_i(k)} = \alpha_{k_i(\pi_i(k))} v_{k_i(\pi_i(k)), \pi_i(k)} / v_{i, \pi_i(k)}$ is just a constant scalar multiplied by the variable $\alpha_{k_i(\pi_i(k))}$ of the relevant competitor $k_i(\pi_i(k))$ (the runner-up for items $i$ already wins, the current winner otherwise), this is a purely linear inequality.

    \paragraph{Conclusion.}
    We check $\mathcal{O}(m^{2n})$ configurations. For each, we solve an LP with $n + 1$ variables (the multipliers $\alpha_i$ and the tie-breaking slack $\delta$) and $\mathcal{O}(K + nm) = \mathcal{O}(n^3 m^2)$ linear constraints. Since $n$ is constant, solving this LP takes polynomial time. If any configuration's LP is feasible, any valid solution $\vec{\alpha}$ is an $\varepsilon$-PNE, and solving that one LP directly returns such a profile. If all LPs are infeasible, no $\varepsilon$-PNE exists. Therefore, the problem of deciding the existence of an $\varepsilon$-PNE, and computing one if it exists, is in $\mathrm{P}$.
\end{proof}

\section{Inefficiency of Equilibria}\label{sec:inefficiency}
In this section we evaluate the social welfare loss at equilibrium, and our goal is to capture how much social welfare is lost when bidders act strategically rather than optimally, which we measure using the Price of Anarchy (PoA) and Price of Stability (PoS) notions. The \emph{Price of Anarchy} (PoA) compares the optimal welfare to that of the \emph{worst} equilibrium while the \emph{Price of Stability} (PoS) compares it to the \emph{best} one, capturing the loss that is unavoidable even when a benevolent coordinator is free to select the equilibrium. We start with formal definitions of these notions.

\paragraph{Efficiency.}
We will also be interested in the efficiency of the output with respect to the social welfare objective. For a strategy profile $\vec \alpha$ the social welfare of $\vec \alpha$ is defined as:
\begin{equation}
    \sw(\vec{\alpha}) \coloneq \sum_{i \in \mathcal{N}}\sum_{j = 1}^m x_{i,j} \cdot v_{i,j} \cdot
\end{equation}

The \emph{optimal social welfare} of an instance $\mathcal{I}$ is defined as the maximum social welfare achievable over all strategy profiles, i.e., 
$
    \opt(\mathcal{I}) \coloneq \max_{\vec{\alpha}} \sw(\vec \alpha).
$
Notice that in our case the optimal social welfare is trivially achieved by setting each bidder's multiplier to $1$, resulting in each item going to the bidder with the highest value for it, extracting therefore the maximum value possible. Therefore, we can equivalently rewrite the optimum as:
$$
\opt(\mathcal{I}) = \sum_{j=1}^m \max_{i \in \mathcal{N}} v_{i,j}\cdot
$$

\begin{definition}[Price of Anarchy]
The \emph{Price of Anarchy} (PoA) of an instance with respect to the social welfare objective is the worst-case ratio of the optimal social welfare to the social welfare achieved at a pure Nash equilibrium, over all instances:
\begin{equation}
    \mathrm{PoA} \coloneq \sup_{\mathcal{I}:\mathcal{I}\text{ has a PNE}} \;\frac{\opt(\mathcal{I})}{\displaystyle\inf_{\vec{\alpha}\,\in\,\mathsf{NE}(\mathcal{I})} \sw(\vec{\alpha})}\cdot
\end{equation}

\noindent A PoA of $\rho \geq 1$ means that every equilibrium achieves at least a $1/\rho$ fraction of the optimal social welfare.
Notice that the $\sup$ is taken over instances that admit a PNE, since there are instances without any PNE, for which it does not make sense to study the PoA.
\end{definition}

\begin{definition}[Price of Stability]
The \emph{Price of Stability} with respect to the social welfare objective is
\begin{equation}
    \mathrm{PoS} \coloneq \sup_{\mathcal{I}:\mathcal{I}\text{ has a PNE}} \;\frac{\opt(\mathcal{I})}{\displaystyle\sup_{\vec{\alpha}\,\in\,\mathsf{NE}(\mathcal{I})} \sw(\vec{\alpha})},
\end{equation}
where the inner supremum is over all pure Nash equilibria of $\mathcal{I}$, and the outer one over all instances admitting at least one such equilibrium.
\end{definition}

Our main contribution is to show that, for APS games, the two measures coincide. By definition it follows that PoS is at most as much as PoA. Instead of proving separate bounds for these notions we exploit this observation and prove an upper bound on the PoA and a matching lower bound on the PoS.
Our proof of the upper bound has a similar flavour to the proof of \citet{syrgkanis2013composable} that simultaneous first-price auctions are $(1-1/e)$-efficient. Against an arbitrary equilibrium we test each bidder with a single \emph{randomized} deviation, drawn from a fixed distribution on $[0,1-1/e]$ that uniformly scales her values; its expected gain already certifies that the equilibrium captures at least a $1-1/e$ fraction of the optimal welfare.

\begin{theorem}[PoA upper bound]\label{thm:poa-upper}
    In an APS game, the Price of Anarchy with respect to social welfare is at most $\frac{e}{e-1}$.
\end{theorem}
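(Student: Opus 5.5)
Fix an instance with a PNE $\vec{\alpha}$, and let $j \mapsto i^\ast(j)$ assign each item to a bidder of highest value, so $\opt = \sum_j v_{i^\ast(j),j}$. Write $O_i$ for the items assigned to $i$ in this optimum, and $p_j$ for the equilibrium price of item $j$. Since prices are paid by winners, $\sw(\vec{\alpha}) = \sum_i u_i(\vec{\alpha}) + \sum_j p_j$.

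The plan follows the smoothness template of \citet{syrgkanis2013composable}. It proceeds in three steps.

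\textbf{Step 1: the deviation.} For every bidder $i$ I test a randomized deviation $\alpha_i' \sim F$. Here $F$ has density $f(x) = \frac{1}{1-x}$ on $[0, 1-1/e]$, which integrates to $1$. The point is that the deviation uses a pacing multiplier, so it scales \emph{all} of $i$'s values simultaneously. For $j \in O_i$, bidder $i$ wins $j$ whenever $\alpha_i' v_{i,j} > p_j$, since $p_j \ge \max_{k\neq i}\alpha_k v_{k,j}$. Dropping any other items she might win only lowers her utility, because each item she wins gives non-negative utility $(1-\alpha_i')v_{i,j}$. Hence
\[
u_i(\alpha_i',\vec{\alpha_{-i}}) \ge \sum_{j\in O_i} (1-\alpha_i') v_{i,j}\,\mathbf{1}[\alpha_i' > p_j/v_{i,j}].
\]

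\textbf{Step 2: the per-item bound.} For a fixed $j \in O_i$, write $v = v_{i,j}$ and $\theta = \min(p_j/v, 1-1/e)$. The expected contribution of $j$ is
\[
\int_\theta^{1-1/e} (1-x)v\,\frac{dx}{1-x} = v(1-1/e-\theta) \ge (1-1/e)v - p_j.
\]
If $p_j/v \ge 1-1/e$ the right-hand side is non-positive, so the bound still holds with contribution $0$. Ties are harmless by \cref{lem:deviation-sup}, and in any case they form a measure-zero event.

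\textbf{Step 3: summing.} Since $\vec{\alpha}$ is a PNE, $u_i(\vec{\alpha}) \ge \mathbb{E}[u_i(\alpha_i',\vec{\alpha_{-i}})]$ for every $i$. Summing over bidders, and using that the $O_i$ partition the items, gives
\[
\sum_i u_i(\vec{\alpha}) \ge (1-1/e)\opt - \sum_j p_j .
\]
Adding $\sum_j p_j$ to both sides yields $\sw(\vec{\alpha}) \ge (1-1/e)\opt$. This holds for every equilibrium, so the PoA is at most $\frac{e}{e-1}$.

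The main obstacle is checking that restricting to a single multiplier costs nothing, i.e., the inequality in Step 1. The extra items won at a common multiplier can only help, since per-item utility is non-negative when $\alpha_i' \le 1$. The value-scaled deviation then makes the per-item analysis identical to the usual single-item first-price smoothness computation. I would also verify the boundary behaviour near $\theta = 1-1/e$ and the case $v_{i,j}=0$, where $i^\ast(j)$ contributes nothing and the bound is trivial.
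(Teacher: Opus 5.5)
Your proposal is correct and follows the paper's proof essentially step for step. It uses the same decomposition of welfare into utilities plus revenue, the same randomized pacing deviation with density $1/(1-t)$ on $[0,1-1/e]$, the same per-item bound $(1-1/e)v_{i,j}-p_j$, and the same summation over the optimal partition, and your strict winning condition $\alpha_i' v_{i,j} > p_j$ is if anything slightly more careful than the paper's non-strict one, costing nothing since the deviation distribution has no atoms.
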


\begin{proof}
    Fix any PNE $\alpha$ and let $\alpha^*$ be a welfare-maximizing profile.
    Let $\opt_i \subseteq [m]$ denote the set of items bidder $i$ is allocated under $\alpha^*$.
    Notice that the sets $\{\opt_i\}_{i\in [n]}$ are pairwise disjoint and denote $\opt = \sum_{i \in [n]} V_i(\opt_i)$, where we have defined $V_i(S) = \sum_{j \in S} v_{i,j}$.
    Let the price of an item $j$ at equilibrium $\alpha$ be $p_j = \max_{k \in [n]} \alpha_k v_{k,j}$, so that the winner of $j$ pays $p_j$ and the \emph{revenue} collected at $\alpha$ is
    \begin{equation*}
        \rev(\alpha) \coloneq \sum_{j=1}^m p_j.
    \end{equation*}
    Since each bidder pays her own bid exactly on the items she wins, the social welfare splits into the bidders' utilities and the revenue:
    \begin{equation}\label{eq:sw-decomp}
        \sw(\alpha) = \sum_{i \in [n]} u_i(\alpha) + \rev(\alpha),
    \end{equation}
    because $\sum_{i \in [n]} u_i(\alpha) = \sum_{j=1}^m \big(v_{w_j,j} - p_j\big) = \sw(\alpha) - \rev(\alpha)$, where $w_j$ denotes the winner of item $j$ at $\alpha$.

    We will consider a \emph{randomized deviation}. This is only an analytic device and does not alter the solution concept: at a pure Nash equilibrium no \emph{pure} deviation is profitable, so the inequality $u_i(\alpha) \ge u_i(\alpha_i', \alpha_{-i})$ holds for every pure $\alpha_i' \in [0,1]$ simultaneously. Averaging it over any distribution of pure deviations therefore preserves it, since the expected utility of the randomized deviation is just a convex combination of pure-deviation utilities, each bounded by $u_i(\alpha)$.
    Let $\mathcal{D}$ be the distribution on $[0,1-1/e]$ with p.d.f. given by:
    \begin{equation*}
        f(t) = \frac{1}{1-t}.
    \end{equation*}
    To see that this is a valid probability distribution, note that
    \begin{equation*}
        \int_0^{1-1/e}f(t) \,\mathrm{d}t = -\ln(1-t) \bigg|_0^{1-1/e} = -\ln(1/e) = 1.
    \end{equation*}
    Since $\alpha$ is a PNE, for any bidder $i$ and any deviation to a pure strategy $\alpha_i'$ it must be that $u_i(\alpha)\geq u_i(\alpha_i',\alpha_{-i})$.
    Taking the expectation of both sides under $T \sim \mathcal{D}$:
    \begin{equation}\label{eq:ub-exp}
        u_i(\alpha) \geq \mathbb{E}_{T\sim \mathcal{D}}[u_i(T,\alpha_{-i})].
    \end{equation}
    Now, fix a bidder $i$ and an item $j \in \opt_i$.
    When $i$ deviates to $\alpha_i'=T$, her bid for $j$ becomes $Tv_{i,j}$ and she wins the item if and only if $Tv_{i,j}\geq p_j$, gaining utility of $(1-T)v_{i,j}$ from that item.
    Therefore, the expected utility gain of $i$'s deviation from item $j$ is
    \begin{equation*}
        \mathbb{E}_{T \sim \mathcal{D}}[(1-T)v_{i,j} \mathbf{1}\{Tv_{i,j}\geq p_j\}].
    \end{equation*}
    As we seek to lower bound this quantity, we will distinguish two cases, based on the value of $\beta_j \coloneq \frac{p_j}{v_{i,j}}$:
    \begin{itemize}
        \item[-] If $\beta_j\leq 1-1/e$, the expectation evaluates to:
        \begin{equation*}
            v_{i,j}\int_{\beta_j}^{1-1/e}(1-t)f(t) \, \mathrm{d}t = v_{i,j}\int_{\beta_j}^{1-1/e}1 \, \mathrm{d}t = v_{i,j}\left(1-\frac{1}{e}-\beta_j \right) = \left( 1- \frac{1}{e}\right) v_{i,j} - p_j
        \end{equation*}
        \item[-] If $\beta_j>1-1/e$, then the indicator is $0$ on the support of $\mathcal{D}$, so the expectation is $0$, while the expression $\left( 1- \frac{1}{e}\right) v_{i,j} - p_j$ is negative.
    \end{itemize}
    Therefore, we can bound the expectation in both cases as:
    \begin{equation}\label{eq:ub-per-item-bound}
        \mathbb{E}_{T \sim \mathcal{D}}[(1-T)v_{i,j} \mathbf{1}\{Tv_{i,j}\geq p_j\}] \geq \left( 1- \frac{1}{e}\right) v_{i,j} - p_j.
    \end{equation}
    Using the fact that $i$'s utility after deviating is lower bounded by her utility contributions from items in $\opt_i$, we get:
    \begin{equation*}
        \mathbb{E}_{T\sim \mathcal{D}}[u_i(T,\alpha_{-i})] \geq \sum_{j\in \opt_i} \mathbb{E}_{T \sim \mathcal{D}}[(1-T)v_{i,j} \mathbf{1}\{Tv_{i,j}\geq p_j\}] \overset{\eqref{eq:ub-per-item-bound}}{\geq} \left( 1-\frac{1}{e}\right)V_i(\opt_i) - \sum_{j \in \opt_i}p_j.
    \end{equation*}
    Combining this with \eqref{eq:ub-exp} and summing over all bidders:
    \begin{equation*}
        \sum_{i \in [n]}u_i(\alpha) \geq \left(1-\frac{1}{e}\right) \sum_{i \in [n]}V_i(\opt_i) - \sum_{i \in [n]}\sum_{j\in\opt_i}p_j = \left(1-\frac{1}{e}\right) \opt -\rev(\alpha),
    \end{equation*}
    where the last equality uses $\sum_{i \in [n]}\sum_{j\in\opt_i}p_j = \sum_{j=1}^m p_j = \rev(\alpha)$, since the welfare-optimal allocation $\{\opt_i\}_{i\in[n]}$ partitions the set of items.
    Finally, substituting the welfare decomposition~\eqref{eq:sw-decomp}, we obtain
    \begin{equation*}
        \sw(\alpha) = \sum_{i \in [n]} u_i(\alpha) + \rev(\alpha) \geq \left(1-\frac{1}{e}\right) \opt,
    \end{equation*}
    which directly gives the upper bound on the PoA:
    \begin{equation*}
        \text{PoA} = \frac{\opt}{\sw(\alpha)} \leq \frac{e}{e-1}
    \end{equation*}
\end{proof}

Having shown that no equilibrium can be too inefficient, we now establish a matching lower bound, which pins down the Price of Stability exactly.

\begin{theorem}[PoS lower bound]\label{thm:pos-lower}
    In an APS game, the Price of Stability with respect to social welfare is at least $\frac{e}{e-1}$.
\end{theorem}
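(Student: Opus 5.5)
We prove the bound with an explicit family of instances in which every pure Nash equilibrium (not just the worst one) loses roughly a $1/e$ fraction of the optimum. The idea is to discretize the equal-revenue distribution that appears in the proof of \cref{thm:poa-upper}.

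\textbf{Continuous template.} Bidder $1$ values every item at $1$. Each item $j$ also has competitors with a common value $c_j\in[0,1-1/e]$. The values $c_j$ are distributed over the items so that the fraction of items with $c_j\le t$ is $F(t)=\frac{1/e}{1-t}$ for $t\in[0,1-1/e]$, so a mass $1/e$ of items has $c_j=0$. If the competitors bid their values, bidder $1$ playing $\alpha_1=t$ wins a fraction $F(t)$ of the items and gets utility $(1-t)F(t)=1/e$ for every $t$. Suppose she takes only the free items. The welfare is then
\[
\tfrac1e+\int_0^{1-1/e}t\,\mathrm dF(t)=\tfrac1e+\tfrac1e\Big[\tfrac1{1-t}+\ln(1-t)\Big]_0^{1-1/e}=1-\tfrac1e,
\]
while $\opt=1$ (normalized per item). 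The whole proof is about turning this into a finite instance in which \emph{all} equilibria look like this.

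\textbf{Discrete instance.} Take $m$ items and index them so that $c_1\le\dots\le c_m$, with $c_j$ the discretized quantiles of $F$ (about $m/e$ of them equal to $0$). Perturb the $c_j$ slightly upward on the positive items. The perturbation should make bidder $1$'s prefix value $(1-c_k)k$ \emph{strictly} maximized at $k=k_0\approx m/e$, the number of zero-valued items, with a margin $\eta>0$. For every item with $c_j>0$, add \emph{two} competitor bidders, each valuing only item $j$ at $c_j$, placed after bidder $1$ in the index order.

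The key step, and the main obstacle, is pinning down the prices in every exact PNE, because the competitors could shade their bids. We use Bertrand competition between the two twins. Suppose an item $j$ is won by a twin at price $p_j<c_j$. Then the other twin gains $c_j-p_j>0$ by outbidding, so exactness forces $p_j=c_j$. Suppose instead it is won at price above $c_j$. Then the winner has negative utility, a contradiction. Hence on every item that bidder $1$ does not win, the highest competing bid is exactly $c_j$. On items bidder $1$ wins, some twin must bid $c_j$, for otherwise a twin gains by raising its bid. Consequently bidder $1$'s thresholds are $t_{1,j}=c_j$ in every PNE. By \cref{lem:deviation-sup}, and since she wins ties as bidder $1$, her best response is exactly the prefix maximizing $(1-c_k)k$, which by the perturbation is $S_{k_0}$. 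This forces $\alpha_1=0$. I would still verify the tie-breaking subtleties here: bidder $1$ winning zero-competition items for free, and twins tying each other.

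\textbf{Conclusion.} We then exhibit one PNE, which gives existence: $\alpha_1=0$ and all twins at multiplier $1$. Every twin has utility $0$ and cannot win more profitably, and bidder $1$ is best-responding by the choice of $\eta$. By the previous step every PNE has this allocation. Its welfare is therefore $k_0+\sum_{j>k_0}c_j$ against $\opt=m$, and this ratio tends to $1-1/e$ as $m\to\infty$ and $\eta\to0$ by the Riemann-sum version of the integral computed above. Hence $\sup_{\mathcal I}\opt/\max_{\mathsf{NE}}\sw\ge\frac{e}{e-1}$. Finally, all values lie in $[0,1]$, as the model requires.
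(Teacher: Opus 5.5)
Your instance is the paper's: one bidder valuing every item at $1$, a block of items nobody else values, and contested items whose competitor values follow the discretized equal-revenue profile. The paper takes $c_k = k/(N+k)$, so that $(N+k)(1-c_k)=N$, shifted up by $\eta$. The rest also matches: two identical competitors per contested item that lose ties to the big bidder, the PNE with $\alpha_1=0$ and all twins at $1$, and the same limit $1-1/e$.

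The gap is in your key step. You claim that on items bidder~$1$ wins, ``some twin must bid $c_j$, for otherwise a twin gains by raising its bid''. This is false: a twin can profit only if $\alpha_1<c_j$. When $\alpha_1\ge c_j$, no twin can take the item, since she would need a bid strictly above $\alpha_1\ge c_j$ while her bid is capped at $c_j$. So the twins' multipliers there are unconstrained, and $t_{1,j}$ can be anything in $[0,c_j]$. Hence ``$t_{1,j}=c_j$ in every PNE'' is not established. Your derivation of $\alpha_1=0$ from the unique best prefix $S_{k_0}$ rests on it, since it also presumes she wins exactly the items with $c_j\le\alpha_1$. The threshold claim does hold a posteriori, but only once you know bidder~$1$ wins no contested item, which is what this step is supposed to prove.

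The repair is to use what the twin deviation actually gives and compare utilities directly, as the paper's claim that the big bidder concedes every contested item does. Suppose that at a PNE bidder~$1$ wins $\ell\ge 1$ contested items. Then $\alpha_1\ge c_j$ for each of them, so $\alpha_1$ is at least the largest of their values, hence at least $c_{k_0+\ell}$, the $\ell$-th smallest contested value. Her utility is then exactly $(k_0+\ell)(1-\alpha_1)\le(k_0+\ell)(1-c_{k_0+\ell})<k_0$ by your perturbation margin. Deviating to $0$, however, still wins all $k_0$ free items, which nobody else values, for utility at least $k_0$, a contradiction.

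So every PNE gives each contested item to a twin of value $c_j$, and your welfare computation and limit then go through unchanged. Note that the Bertrand pricing argument is then not needed at all: welfare depends only on who wins each item, not on the price paid.
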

\begin{proof}
    We exhibit a family of instances in which a single \emph{big} bidder, who values every item, faces a pair of identical competitors on each of several \emph{contested} items and \emph{strictly} prefers to concede all of them; pairing the two competitors on each contested item pins its price and keeps an (inefficient) equilibrium alive. The family has three parameters, the numbers $N$ and $M$ of free and contested items and a small perturbation $\eta > 0$ to the competitors' values, and tuning them drives its Price of Stability up to $\frac{e}{e-1}$, though no single instance attains the bound.

Fix integers $N, M \geq 1$ and a real $\eta \in \left(0,\, \tfrac{N}{N+M}\right)$. The instance $\mathcal{I}_{N,M,\eta}$ has:
\begin{itemize}
    \item[-] \emph{Free} items $f_1, \dots, f_N$ and \emph{contested} items $g_1, \dots, g_M$;
    \item[-] a \emph{big} bidder $0$ with $v_{0,f_i} = 1$ and $v_{0,g_k} = 1$ for all $i \in [N]$, $k \in [M]$;
    \item[-] for each $k \in [M]$, a \emph{competitor pair} $A_k, B_k$ with $v_{A_k,g_k} = v_{B_k,g_k} = c_k + \eta$, where $c_k \coloneq \tfrac{k}{N+k}$, and value $0$ on every other item.
\end{itemize}
Ties on any item are broken in favour of the big bidder $0$, and within a competitor pair in favour of $A_k$. The constraint $\eta < \tfrac{N}{N+M} = 1 - c_M$ guarantees $c_k + \eta < 1$ for all $k$, so the welfare-optimal allocation gives every item to bidder $0$, yielding $\opt = N + M$.

\begin{claim}\label{lem:pos-existence}
    The profile $\alpha_0 = 0$ and $\alpha_{A_k} = \alpha_{B_k} = 1$ for all $k \in [M]$ is an exact PNE of $\mathcal{I}_{N,M,\eta}$.
\end{claim}
\begin{proof}
    Bidder $0$ bids $0$, winning all $N$ free items at price $0$ and no contested item, for utility $N$. Any deviation for her that wins her the contested prefix $g_1, \dots, g_\ell$ requires a multiplier of at least $c_\ell + \eta$ (to reach the competitors' bids), giving utility
    \[
        (N+\ell)\bigl(1 - c_\ell - \eta\bigr) = (N+\ell)(1-c_\ell) - (N+\ell)\eta = N - (N+\ell)\eta < N,
    \]
    where we used $(N+\ell)(1-c_\ell) = (N+\ell)\cdot\tfrac{N}{N+\ell} = N$. Hence $\alpha_0 = 0$ is strictly optimal for bidder $0$. Each $A_k$ wins $g_k$ at price $c_k + \eta$, for utility $0$; lowering her multiplier gives $g_k$ to $B_k$ (still utility $0$), and she cannot bid above $1$. Symmetrically, $B_k$ obtains utility $0$ and cannot profitably win. No bidder has a profitable deviation.
\end{proof}
Next, we show that at any PNE of this instance, the big bidder will not win any of the contested items.

\begin{claim}\label{lem:pos-concede}
    In every PNE of $\mathcal{I}_{N,M,\eta}$, bidder $0$ does not win any contested item.
\end{claim}
\begin{proof}
    Let $\alpha$ be a PNE, and suppose for contradiction that bidder $0$ wins a non-empty set $S$ of contested items, with $\ell \coloneq |S| \geq 1$ and largest index $k^\ast \coloneq \max\{k : g_k \in S\} \geq \ell$, so that $c_{k^\ast} \geq c_\ell$ by monotonicity of $c_k$.

    Fix any $g_k \in S$; bidder $0$ wins it at price equal to her own bid $\alpha_0$. If $\alpha_0 < c_k + \eta$, then competitor $A_k$ has a profitable deviation to a multiplier in $\bigl(\tfrac{\alpha_0}{c_k+\eta},\, 1\bigr)$, winning $g_k$ at a price just above $\alpha_0$ for utility approaching $(c_k+\eta) - \alpha_0 > 0$, in place of her current $0$.
    Hence, at equilibrium $\alpha_0 \geq c_k + \eta$ for every $g_k \in S$, and in particular $\alpha_0 \geq c_{k^\ast} + \eta \geq c_\ell + \eta$.

    Bidder $0$ wins the $N$ free items and the $\ell$ items of $S$, all at price $\alpha_0$, so
    \[
        u_0(\alpha) = (N+\ell)(1-\alpha_0) \leq (N+\ell)(1 - c_\ell - \eta) = N - (N+\ell)\eta < N.
    \]
    But deviating to $\alpha_0' = 0$ wins all $N$ free items for free (whatever the competitors bid), so $u_0(0, \alpha_{-0}) \geq N > u_0(\alpha)$, contradicting that $\alpha$ is a PNE. Therefore $S = \varnothing$ at any PNE.
\end{proof}

It remains to compute the Price of Stability of $\mathcal{I}_{N,M,\eta}$ and show that it approaches $\frac{e}{e-1}$.

    By \cref{lem:pos-existence} a PNE exists, so $\mathrm{PoS}(\mathcal{I}_{N,M,\eta})$ is well-defined. By \cref{lem:pos-concede}, in every PNE bidder $0$ wins all $N$ free items, and each contested item $g_k$, which only bidder $0$ and the pair $A_k, B_k$ value, is therefore won by a competitor of value $c_k + \eta$. Every PNE thus has the same welfare
    \[
        \sw = N + \sum_{k=1}^{M}(c_k + \eta) = N + M - N\bigl(H_{N+M} - H_N\bigr) + M\eta,
    \]
    using $\sum_{k=1}^{M} c_k = \sum_{k=1}^{M}\bigl(1 - \tfrac{N}{N+k}\bigr) = M - N(H_{N+M} - H_N)$. Consequently,
    \[
        \mathrm{PoS}(\mathcal{I}_{N,M,\eta}) = \frac{N+M}{\,N + M - N(H_{N+M} - H_N) + M\eta\,}.
    \]
    Letting $\eta \to 0^+$ and then $N, M \to \infty$ with $r \coloneq M/N$ fixed, we have $H_{N+M} - H_N \to \ln(1 + r)$, so the ratio tends to $\frac{1+r}{1 + r - \ln(1+r)}$. This is maximized at $r = e-1$, where it equals $\frac{e}{e-1}$.
\end{proof}

We have now proved both the upper and the lower bound. The main result of the section follows directly.
\begin{theorem}\label{thm:inefficiency}
    For APS games, the Price of Anarchy and the Price of Stability with respect to social welfare are both equal to $\frac{e}{e-1}$.
\end{theorem}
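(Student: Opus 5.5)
The statement is obtained by sandwiching the two measures between the bounds already established. By definition, for every instance $\mathcal I$ admitting a PNE, the best equilibrium has welfare at least that of the worst. Hence
\[
\frac{\opt(\mathcal I)}{\sup_{\vec\alpha\in\mathsf{NE}(\mathcal I)}\sw(\vec\alpha)}\le\frac{\opt(\mathcal I)}{\inf_{\vec\alpha\in\mathsf{NE}(\mathcal I)}\sw(\vec\alpha)}.
\]
Taking the supremum over all such instances gives $\mathrm{PoS}\le\mathrm{PoA}$.

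\cref{thm:poa-upper} gives $\mathrm{PoA}\le\frac{e}{e-1}$, and \cref{thm:pos-lower} gives $\mathrm{PoS}\ge\frac{e}{e-1}$. Chaining these yields
\[
\tfrac{e}{e-1}\le\mathrm{PoS}\le\mathrm{PoA}\le\tfrac{e}{e-1},
\]
so all inequalities are equalities.

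The only delicate point is checking that the two definitions range over the same class of instances, namely those with a PNE. They do: \cref{thm:pos-lower} uses instances that admit a PNE by \cref{lem:pos-existence}, and the bound of \cref{thm:poa-upper} holds for every PNE of every instance. One should also check that the lower bound is attained only as a supremum over the family $\mathcal I_{N,M,\eta}$, so that $\mathrm{PoS}$ equals $\frac{e}{e-1}$ as a supremum rather than being achieved by any single instance. The definitions use $\sup$, so this is consistent.

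A final minor check concerns the PoS family, which uses a tie-breaking order (favouring bidder $0$) that differs from the lexicographic rule. This is handled by relabelling bidder $0$ as the lowest index and $A_k$ before $B_k$, which makes the prescribed order lexicographic.
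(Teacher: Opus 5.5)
Your proposal is correct and follows the paper's proof: you get $\mathrm{PoS}\le\mathrm{PoA}$ from the definitions and then sandwich both measures using \cref{thm:poa-upper} and \cref{thm:pos-lower}. Your extra checks are sound and slightly more careful than the paper's write-up: both definitions range over the same class of instances, the bound is a supremum not attained by any single instance, and relabelling bidder $0$ first and each $A_k$ before $B_k$ makes the tie-breaking of $\mathcal{I}_{N,M,\eta}$ lexicographic.
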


\begin{proof}[Proof of \cref{thm:inefficiency}]
As mentioned before we are going to use the following observations to prove our bounds.
\begin{itemize}
    \item[-] An \emph{upper} bound on the PoA transfers to the PoS. Our upper bound (\cref{thm:poa-upper}) shows that \emph{every} equilibrium has welfare at least $\left(1-\tfrac{1}{e}\right)\opt$, so in particular the best one does; equivalently, $\mathrm{PoS} \le \mathrm{PoA} \le \frac{e}{e-1}$.
    \item[-] A \emph{lower} bound on the PoS transfers to the PoA. Our lower bound (\cref{thm:pos-lower}) exhibits instances in which \emph{every} equilibrium, and hence in particular the worst, is $\frac{e}{e-1}$-inefficient, so $\mathrm{PoA} \ge \mathrm{PoS} \ge \frac{e}{e-1}$.
\end{itemize}

    By \cref{thm:poa-upper} we have $\mathrm{PoA} \le \frac{e}{e-1}$, and hence $\mathrm{PoS} \le \mathrm{PoA} \le \frac{e}{e-1}$. Conversely, \cref{thm:pos-lower} gives $\mathrm{PoS} \ge \frac{e}{e-1}$, and hence $\mathrm{PoA} \ge \mathrm{PoS} \ge \frac{e}{e-1}$. Therefore $\mathrm{PoA} = \mathrm{PoS} = \frac{e}{e-1}$.
\end{proof}
\section{Future directions}

In this paper we studied APS games, in which utility-maximizing bidders each choose a single pacing multiplier in simultaneous first-price auctions. We showed that pure Nash equilibria need not exist, and in fact that no $\varepsilon$-approximate pure equilibrium is guaranteed even for a linear $\varepsilon$. We characterized the inefficiency of equilibria when they do exist, proving that both the Price of Anarchy and the Price of Stability are exactly $e/(e-1)$.
On the computational side, we established that deciding the existence of an $\varepsilon$-approximate pure equilibrium is $\mathsf{NP}$-complete, while giving polynomial-time algorithms when either the number of bidders or the number of items is constant.

A natural next step is to introduce budgets. Pacing multipliers were originally proposed as a mechanism for budget management, so the budgeted variant of APS games is arguably the setting of greatest practical interest. Our hardness construction carries over directly, since it can be embedded as an instance in which no budget constraint binds, so deciding equilibrium existence remains $\mathsf{NP}$-hard. Obtaining positive results appears substantially more difficult in this regime: budget constraints couple a bidder's behaviour across items and break the per-item reasoning that our efficiency and algorithmic arguments rely on, so even bounding the inefficiency of equilibria seems to require new techniques.

A second direction is to move beyond complete information to a Bayesian setting, in which each bidder's value is drawn from a publicly known prior but its realization is private. Here the solution concept becomes a Bayes-Nash equilibrium in pacing strategies, and the relevant questions, existence, inefficiency, and computation, must be revisited when bidders best respond in expectation over their opponents' values rather than against a known profile.

\paragraph{Use of AI tools.}
We used the large language models Claude Opus and Claude Fable in the following parts of this work. For the NP-hardness result, they helped us decide some technical parts of the proof, including the choice of the problem to reduce from. For the polynomial-time algorithm for a constant number of items, they suggested a cleaner argument for enumerating the configurations. For the bounds on the Price of Anarchy and the Price of Stability, we provided them with our initial proofs, together with related papers that use the relevant techniques, and they helped produce the tightened bounds. Finally, after we had written a first draft, we used them to identify possible errors and typos, and to suggest improvements to the language and places where technical clarifications were needed. All results were independently checked, substantially rewritten, and verified by the authors. The authors take full responsibility for all claims, proofs, references, and the final text.

\bibliographystyle{plainnat}
\bibliography{refs}

\end{document}